\documentclass[pdflatex,sn-basic]{sn-jnl}

\usepackage{graphicx}%
\usepackage{multirow}%
\usepackage{amsmath,amssymb,amsfonts}%
\usepackage{amsthm}%
\usepackage{mathrsfs}%
\usepackage[title]{appendix}%
\usepackage{xcolor}%
\usepackage{textcomp}%
\usepackage{manyfoot}%
\usepackage{booktabs}%
\usepackage{algorithm}%
\usepackage{algorithmicx}%
\usepackage{algpseudocode}%
\usepackage{listings}%
\usepackage{natbib}          

\usepackage[T1]{fontenc}
\usepackage{lmodern}

\theoremstyle{plain}%
\newtheorem{theorem}{Theorem}
\newtheorem{proposition}{Proposition}%
\newtheorem{corollary}{Corollary}%
\newtheorem{assumption}{Assumption}%

\newtheorem*{remark}{Remark}%

\newtheorem{definition}{Definition}%

\begin{document}

\title[Article Title]{Identifying the potential of sample overlap in evidence synthesis of observational studies}

\author*[1]{\fnm{Zhentian} \sur{Zhang}}\email{zhentian.zhang@med.uni-goettingen.de}

\author[1]{\fnm{Tim} \sur{Friede}}\email{tim.friede@med.uni-goettingen.de}

\author[1]{\fnm{Tim} \sur{Mathes}}\email{tim.mathes@med.uni-goettingen.de}

\affil*[1]{\orgdiv{Department of Medical Statistics}, \orgname{University Medical Center Göttingen}, \orgaddress{\state{Lower Saxony}, \country{Germany}}}


\abstract{Sample overlap is a common issue in evidence synthesis in the field of medical research, particularly when integrating findings from observational studies utilizing existing databases such as registries. Due to the general inaccessibility of unique identifiers for each observation, addressing sample overlap has been a complex problem, potentially biasing evidence synthesis outcomes and undermining their credibility.
We developed a method to construct indicators for the degree of sample overlap in evidence synthesis of studies based on existing data. Our method is rooted in set theory and is based on the coding of the ranges of several well selected sample characteristics, offers a practical solution by focusing on making inference based on sample characteristics rather than on individual participant data. Useful information, such as the overlap-free sample set with the largest sample size in an evidence synthesis, can be derived from this method.
We applied our model to several real-world evidence syntheses, demonstrating its effectiveness and flexibility. Our findings highlight the growing importance of addressing sample overlap in evidence synthesis, especially with the increasing relevance of secondary use of data, an area currently under-explored in research.
}

\keywords{overlap, evidence synthesis, meta-analysis, aggregated data, observational study}



\maketitle

\section{Introduction}\label{sec1}

Due to the rapid advancements in information technology and the expansion of data infrastructures in recent years, the medical field has witnessed an unprecedented increase in the volume of data produced and stored. However, a significant proportion of this data is redundant. Specifically, personal medical information, biomedical data, and clinical trial observations are often documented in multiple repositories, servers, and registers.

Observational studies that rely on existing data, such as clinical registries, represent important resources for answering epidemiological and clinical questions. Evidence synthesis is prone to bias due to sample overlap, especially when multiple included studies are based on similar databases\citep{Mathes2023}. Given the escalating volume of electronically collected and stored data, there is an anticipated surge in observational studies, which increases the risk of incorporating overlapping study samples into the same evidence synthesis. For simplicity, in this article we focus on registries, but the problem also exists for other types of electronic databases.

Overlap between samples can have a significant impact on the results of evidence synthesis. For example, \cite{Hussein2022} showed that in a meta-analysis, overlap can cause spurious high precision and considerable changes to the overall relative effect estimates. Despite the potentially critical impact of overlap, the awareness of this issue is often missing. 
Although there are some overviews of the overlap problem \citep{Mathes2023}, the lack of a systematic approach to handle this problem may further hinder researchers in addressing it.

Several attempts to address similar problems statistically have been made in the area of genome research, where secondary use of samples is common practice \citep{Lin2009, Han2016, Jin2020}. \cite{Lin2009} introduced a framework for the meta-analysis of GWAS (Genome-wide association studies) meta-analysis that can handle overlapping subjects between studies, which is simple to implement and computationally feasible for large GWAS. \cite{Han2016} proposed a method that transforms the covariance structure of the data, which could be then used in downstream analyses, and demonstrated the flexibility of the method using empirical datasets. \cite{Jin2020} proposed a method that resolves overlapping issue in gene-environment testing, utilizing Lin and Han’s correlation structures to generalize covariance matrices from the regular meta-regression model, and provided statistical tests for the joint effects of the gene main effect. However, there is very limited knowledge on the performance of their methods due to lack of comparison and application to (comparative) observational studies in other health care areas. In addition, in all of the three studies mentioned above, the overlapping portion in the samples of different genomics databases are assumed to be known in these approaches, whereas in fields other than genomics it might be difficult to estimate the true proportion of sample overlap. Efforts were also made in more general scenarios \citep{Bom2020, Lunny2021, Wolery2010}. Bom and Rachingers’ approach uses a generalized weight solution to handle sample overlap. By approximating the variance-covariance matrix that describes the correlation structure between outcomes, the method requires again the number of overlapping subjects as in the GWAS studies, which is rarely available outside of GWAS field. Also, their approach deviates from the common way of inverse variance weighting, is thus less easy to interpret. \cite{Lunny2021} considered overlapping of study instead of observations between studies, which is a related but different problem because the overlapping unit is the study. \cite{Wolery2010} compared four overlap methods for single-subject data synthesizing with visual analysts’ judgments, which only concentrates on the special case of single subject design (for example, the overlapping percentage of before and after treatment observations of a single patient). To our knowledge no established method and guidance exist on how to identify, quantify and account for overlap of observations (e.g. patients) systematically in evidence synthesis.

The focus of this paper is to provide a first step in addressing the sample overlap problem in evidence synthesis in the field of medical research. We will develop a theory to describe overlap, design methods and algorithms to derive useful information, such as the overlap-free sample sets with the largest sample size and the graph of the potential of overlap, and apply the methods in real-world evidence synthesis to show their practical value. Although the method is designed to improve systematic reviews and meta analyses of observational studies in medical research, the idea can be relevant in many other fields.





\section{Theoretical approach to identify sample overlap in evidence synthesis}\label{sec:dos}
\subsection{Preliminaries}
\subsubsection{Overlapping data vs longitudinal/clustered data}
The overlap considered in this paper refers to the same observations being included in multiple studies, which causes problems when we want to combine their results. This differs from clustered data structures, such as repeated measurements from the same patient in longitudinal settings or observations within the same unit in hierarchical models. In those cases, the data points correspond to distinct observations and therefore contribute additional information even if the outcomes are identical, albeit with dependencies that must be modeled. By contrast, in the overlapping case, repeated inclusion of records of the same observation does not increase the amount of information as the underlying events are identical.

\subsubsection{Overlap as a multivariate-relationship}
Overlap is a multivariate relationship. In contrast to bivariate relationship such as correlation, it cannot be fully represented using a single matrix. Figure~\ref{fig:triple overlap} shows an example of that.
\begin{figure}[ht]
    \centering
    \includegraphics[width=1\textwidth]{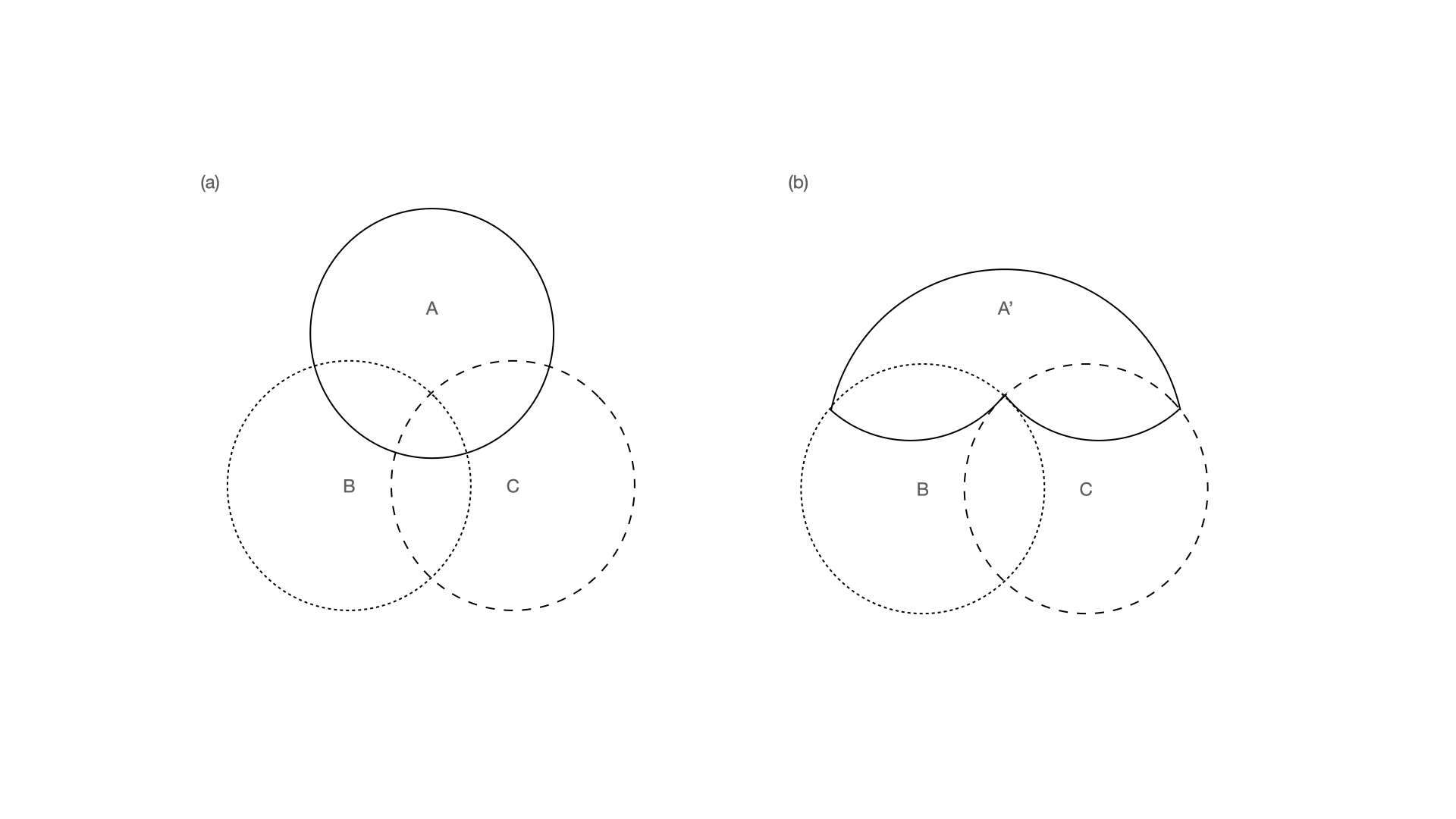}
    \caption{Pairwise overlap can be insufficient to characterize multivariate overlap. Two configurations are shown in which the pairwise intersections $(A\cap B, A\cap C, B\cap C)$ are identical, but the three-way intersection differs: $A\cap B\cap C\neq\emptyset$ in the left configuration, whereas $A'\cap B\cap C=\emptyset$ in the right. This illustrates that overlap among multiple studies cannot, in general, be recovered from pairwise overlap information alone.}

    \label{fig:triple overlap}
\end{figure}

For an evidence synthesis of $n$ study samples, $2^n-n-1$ combinations are of interest. $2^n$ is the cardinality of the power set of the set of samples, minus the number of studies, minus the empty set. In Figure~\ref{fig:combinations} we can see an example of all the combinations of an evidence synthesis of 6 study samples. Every column represents a combination, and the 57 combinations in the red rectangular are the ones whose overlapping situation we are interested in.
\begin{figure}[ht]
    \centering
    \includegraphics[width=1\textwidth]{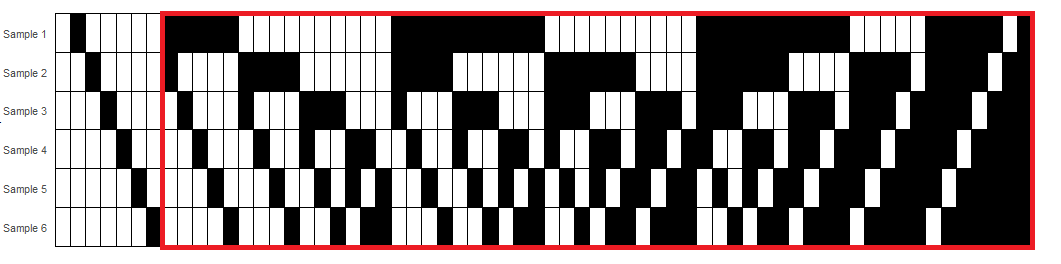}
    \caption{Study-combinations of $n=6$ studies. Columns correspond to subsets $A\subseteq\Omega$, rows correspond to studies $S_1,\ldots,S_6$, and a filled entry indicates membership $S_i\in A$. The highlighted block marks the $2^n-n-1$ non-empty combinations of size at least two for which overlap assessment is relevant.}
    \label{fig:combinations}
\end{figure}

\subsection{Formal setup}\label{sec: tmoo}
For simplicity and without loss of generality, we will develop our method in the context of meta-analysis using registry-based study in medical area, but the theory can be applied to evidence synthesis of observational studies in general. 

Registry studies utilize existing registry data to answer clinical questions. We use "registry" to represent any organized system that stores uniform data on a population \cite{Gliklich2020Registry}. In order to find the data suitable for a clinical question, the data are filtered using certain characteristics of the event (e.g. measure of outcome, time and location of observation, age of subjects). Such characteristics are the measured and stored versions of the underlying characteristics of the observation. Our approach uses information from these characteristics to model overlap of study samples when combining the study results.
\subsubsection{Notations}
\begin{definition}[Intrinsic characteristic vector]\label{def:icv}
Let $u$
uniquely index an underlying observation event (e.g., a particular subject at a particular measurement time/encounter).
The intrinsic characteristic vector of that observation event, given $n_k$ intrinsic characteristics, is a (latent) vector
\[
\mathbf{x}_u = (d_{u,1},\ldots,d_{u,n_k})
\]
collecting these characteristics inherent to that event.
By inherent and intrinsic we mean that the characteristics are well-defined for an event: for any $u,v$ and any $k\in\{1,\ldots,n_k\}$,
\[
u = v \;\Rightarrow\; d_{u,k} = d_{v,k}
\qquad\text{(equivalently, } d_{u,k}\neq d_{v,k}\Rightarrow u\neq v\text{)}.
\]
\end{definition}
The dimensions of $\mathbf{x}_u$ can be for example the underlying outcome value, the event time and location,
and subject/environmental characteristics at the time of observation, etc.
The exact values of certain dimensions of $\mathbf{x}_u$ may be unknown in practice.
Even for dimensions that are, in principle, measurable at arbitrarily high resolution (e.g., time or location),
real-world observations are recorded with finite precision and may be subject to measurement error or missingness.
In this section we stay with $\mathbf{x}_u$ for framing the basic setups, and show our approach for the real-world versions of it in the next sections.

Let $S_C=\{\mathbf{x}_1,\ldots,\mathbf{x}_{n_0}\}$ denote the collective sample, i.e., the whole set of distinct intrinsic characteristics vectors underlying all the existing recorded version of them in relevant registries for the target population.
For study $i\in\{1,\ldots,n\}$, let $S_i\subseteq S_C$ be its within-study overlap-free study sample of size $n_i$, written as
$S_i=\{\mathbf{x}_{I_{i,1}},\ldots,\mathbf{x}_{I_{i,n_i}}\}$ with indices
$I_{i,j}\in\{1,\ldots,n_0\}$ and $I_{i,j}\neq I_{i,\ell}$ for $j\neq \ell$.
Let $\Omega=\{S_1,\ldots,S_n\}$ denote the set of all study samples in an evidence synthesis.
For any $A\subseteq\Omega$, $\bigcup_{S_i\in A} S_i$ is then the pooled sample of the studies in $A$. Denote $S=\bigcup_{S_i\in \Omega} S_i$ as the sample of the meta-analysis. In other words, the meta-analysis sample is the overlap-free aggregation of the intrinsic characteristics vectors that are used by the studies included in the meta-analysis.
\begin{remark}
    By requiring $ I_{i, j}\neq I_{i, l} | j\neq l $ in the setup, we assume no overlap within each study sample.  In other words we regard $S_i$s as sets and thus could use the language of set theory to describe the overlap structure among them.
\end{remark}




From the setup, we have $S\subseteq S_C$. In the ideal case where there is no sample overlap within a meta-analysis, its $S$ is just the direct aggregation of $S_i$s. However, due to the overlap between $S_i$s, $|S|$ can be smaller than the sum of the individual sample sizes of the studies. To know how much smaller it is, we need definitions to describe the finer structure of overlap relationship between studies. This motivates following definition:
\begin{definition}[Overlap set and proportion of overlap]\label{def:ooasos}
For any $A\subseteq\Omega$ with $|A|\ge 2$, define the overlap set
\[
O(A):=\bigcap_{S_i\in A} S_i.
\]
and the proportion of overlap
$$\pi(A)= \frac{ \vert O(A) \vert}{ \vert \bigcup\limits_{S_i\in A} S_i \vert}$$
For $|A|<2$ we set $O(A):=\emptyset$ and $\pi(A):=0$ by convention.
\end{definition}

$O(A)$ provides a unambiguous basis for the discussion of overlap;
$\pi(A)$ gives a possible quantitative description of the degree of overlap among $A$.

\subsubsection{Overlap structure}\label{sec:os}
We defined $\pi(A)$ to quantify overlap. It is just one of the possible description of overlap.  We define overlap structure to generalize it to a range of such descriptions via real valued functions.

\begin{definition}[Overlap structure]\label{def:os}
Let $\Omega$ be a set of studies. An overlap structure on $\Omega$ is the set
\[\{(A, f(A)) : A \subseteq \Omega\},
\]
i.e., the graph of a set function $f:2^\Omega\to\mathbb{R}$ that assigns each study-combination
$A$ a real number $f(A)$ quantifying the overlap within $A$. 
\end{definition}

Examples for $f(A)$:
\[
f_1(A)=\mathbb{I}\{O(A)\neq\emptyset\},\qquad
f_2(A)=|O(A)|,\qquad
f_3(A)=\pi(A),
\qquad
f_4(A)=\frac{ \vert O(A) \vert}{ \left(\prod\limits_{S_i\in A} |S_i|\right)^{1/|A|}
}.
\]

In other words, for a given set A, $f_1(A)$ tells us if there is overlap or not in A, $f_2(A)$ gives the amount of overlapping elements and $f_3(A),f_4(A)$ gives different quantitative description of the degrees of overlap.

Example:\\
Let $\Omega=\{S_1,S_2,S_3,S_4\}$ ; $S=\{\mathbf{x}_a,\mathbf{x}_b,...,\mathbf{x}_i\}$ and \\
$S_1$ (sample 1) =\{$\mathbf{x}_a,\mathbf{x}_b,\mathbf{x}_c$\}\\
$S_2$ (sample 2) =\{$\mathbf{x}_c,\mathbf{x}_d,\mathbf{x}_e,\mathbf{x}_f$\}\\
$S_3$ (sample 3) =\{$\mathbf{x}_b,\mathbf{x}_f$\}\\
$S_4$ (sample 4) =\{$\mathbf{x}_d,\mathbf{x}_f,\mathbf{x}_g,\mathbf{x}_h,\mathbf{x}_i$\}\\

\begin{figure}[ht]
\centering
    \includegraphics[width=0.5\textwidth]{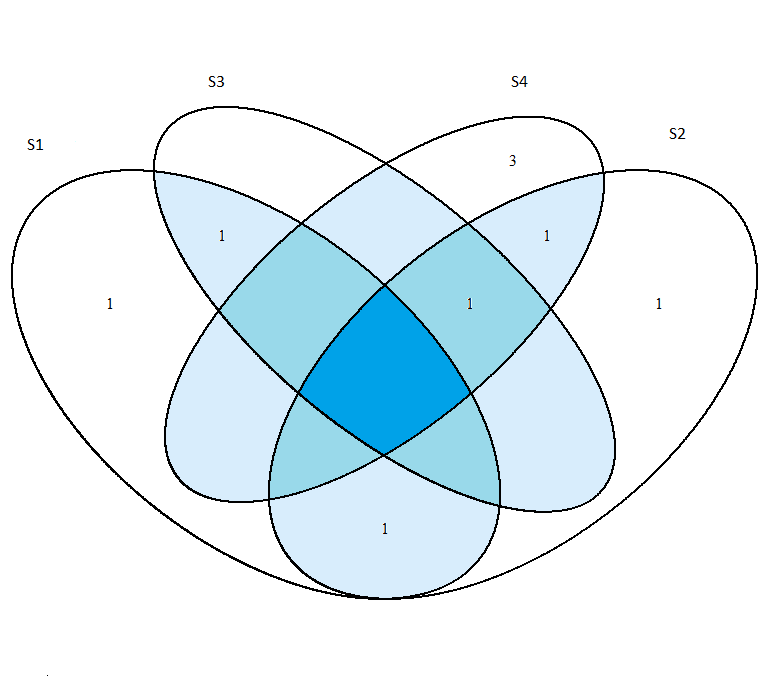}
   \caption{Example overlap structure for four study samples. The Venn diagram visualizes which observations are shared across $S_1,\ldots,S_4$. Shading intensity reflects multiplicity under naive sample-size addition (darker regions correspond to observations that would be counted more times if overlaps were ignored). This example underlies Table~\ref{tab:example_overlap_structure}, which reports $f(A)$ values for selected study-combinations.}

    \label{fig: Venn diagram with quantity}
\end{figure}

\begin{table}[ht]
\centering
\caption{Illustrative values of the overlap set $O(A)=\bigcap_{S_i\in A}S_i$ and four overlap summaries for selected study-combinations $A\subseteq\Omega$, where $\Omega=\{S_1,S_2,S_3,S_4\}$ is the toy example shown in Figure~\ref{fig: Venn diagram with quantity}. The functions are $f_1(A)=\mathbb{I}\{O(A)\neq\emptyset\}$, $f_2(A)=|O(A)|$, $f_3(A)=\pi(A)=|O(A)|\big/ \left|\bigcup_{S_i\in A}S_i\right|$, and $f_4(A)=|O(A)|\big/\left(\prod_{S_i\in A}|S_i|\right)^{1/|A|}$. For $|A|<2$ we use the conventions $O(A)=\emptyset$ and $f_1(A)=f_2(A)=f_3(A)=f_4(A)=0$.}

\label{tab:example_overlap_structure}
\begin{tabular}{|c|c|c|c|c|c|c|}
\hline
NR. & A & $O(A)$ & $f_1(A)$ & $f_2(A)$ & $f_3(A)$ & $f_4(A)$\\
\hline
1 & $\emptyset$ & $\emptyset$ & 0 & 0 & 0 & 0\\

2 & $\{S_1\}$ & $\emptyset$ & 0 & 0 & 0 & 0\\

3 & $\{S_2\}$ & $\emptyset$ & 0 & 0 & 0 & 0\\

4 & $\{S_3\}$ & $\emptyset$ & 0 & 0 & 0 & 0\\

5 & $\{S_4\}$ & $\emptyset$ & 0 & 0 & 0 & 0\\
\hline
6 & $\{S_1,S_2\}$ & \{$\mathbf{x}_c$\} & 1 & 1 & 1/6 $\approx$ 0.167 & 1/$2\sqrt{3}$ $\approx$ 0.289 \\

7 & $\{S_1,S_3\}$ & \{$\mathbf{x}_b$\} & 1 & 1 & 1/4 = 0.25& 1/$\sqrt{6}$ $\approx$ 0.408\\

8 & $\{S_1,S_4\}$ & $\emptyset$ & 0 & 0 & 0 & 0\\

9 & $\{S_2,S_3\}$ & \{$\mathbf{x}_f$\} & 1 & 1 & 1/5 = 0.2 & 1/$\sqrt{8}$ $\approx$ 0.354\\

10 & $\{S_2,S_4\}$ & \{$\mathbf{x}_d, \mathbf{x}_f$\} & 1 & 2 & 2/7 $\approx$ 0.286& 1/$\sqrt{5}$ $\approx$ 0.447\\

11 & $\{S_3,S_4\}$ & \{$\mathbf{x}_f$\} & 1 & 1 & 1/6 $\approx$ 0.167& 1/$\sqrt{10}$ $\approx$ 0.316\\

12 & $\{S_1,S_2,S_3\}$ & $\emptyset$ & 0 & 0 & 0 & 0\\

13 & $\{S_1,S_2,S_4\}$ & $\emptyset$ & 0 & 0 & 0 & 0\\

14 & $\{S_1,S_3,S_4\}$ & $\emptyset$ & 0 & 0  & 0 & 0\\

15 & $\{S_2,S_3,S_4\}$ & \{$\mathbf{x}_f$\} & 1 & 1  & 1/8 = 0.125& 1/2$\sqrt[3]{5}$ $\approx$ 0.292\\

16 & $\{S_1,S_2,S_3,S_4\}$ & $\emptyset$ & 0 & 0  & 0 & 0\\
\hline
\end{tabular}
\end{table}
\subsection{Identifying overlap structure using aggregated information}
The identification and quantification of overlap in evidence synthesis would be straightforward, if the intrinsic characteristic vectors are known: for each of the $2^n-n-1$ combinations of studies in an evidence synthesis, count the number of intrinsic characteristic vectors that appear in all studies in that combination. One of the problem that make the approach impractical is that the data on individual level are in general not available for meta-analyst, for example due to data security reasons. This problem can be partially addressed by using ranges of characteristics instead of single values of it. In the following we show how we use aggregated information to construct a proxy of the upper-bound for overlap structure. A more detail explanation of what we mean by aggregated data/information can be found in Appendix ~\ref{app:aggdata}.

Following previous notations, denote $d_{I_{i,j},k}$ as the value of the k-th dimension of $\mathbf{x}_{I_{i,j}}$, we have:
\begin{proposition}[Exclusion of pair-wise overlap by exclusion of the sets of one intrinsic characteristic]\label{prop:eopwobeosokc}
Denote $D_{i,k}:=\{d_{I_{i,j},k}\vert j\in \{1, 2,..., n_i\}\}$ as the set of the values of k-th dimension of the observations in $S_i$.\\
we have\\
$D_{i_1,k} \cap D_{i_2,k} = \emptyset \Rightarrow S_{i_1} \cap S_{i_2} = \emptyset$\\
\end{proposition}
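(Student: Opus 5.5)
We argue by contraposition: assuming $S_{i_1}\cap S_{i_2}\neq\emptyset$, we show $D_{i_1,k}\cap D_{i_2,k}\neq\emptyset$ for every $k\in\{1,\ldots,n_k\}$. The only ingredients are the set-theoretic setup of Section~\ref{sec: tmoo} and the well-definedness condition in Definition~\ref{def:icv}.

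First, pick an element $\mathbf{x}\in S_{i_1}\cap S_{i_2}$. Since $S_{i_1}=\{\mathbf{x}_{I_{i_1,j}}\}_{j=1}^{n_{i_1}}$ and $S_{i_2}=\{\mathbf{x}_{I_{i_2,\ell}}\}_{\ell=1}^{n_{i_2}}$, there are indices $j$ and $\ell$ with $\mathbf{x}=\mathbf{x}_{I_{i_1,j}}=\mathbf{x}_{I_{i_2,\ell}}$. The elements of $S_C=\{\mathbf{x}_1,\ldots,\mathbf{x}_{n_0}\}$ are listed as distinct and are indexed by the underlying events. Hence equality of the vectors gives $I_{i_1,j}=I_{i_2,\ell}=:u$, so both studies contain the same underlying event $u$.

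Second, apply Definition~\ref{def:icv}. Since $u=u$, the $k$-th characteristic is the single well-defined value $d_{u,k}$. That is, $d_{I_{i_1,j},k}=d_{I_{i_2,\ell},k}=d_{u,k}$. By the definition of $D_{i,k}$, this value belongs to both $D_{i_1,k}$ and $D_{i_2,k}$. So the intersection is nonempty, which is the contrapositive of the claim.

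The main obstacle is the first step: making precise that sharing an element of the two sets means sharing one underlying index $u$, so that the well-definedness implication $u=v\Rightarrow d_{u,k}=d_{v,k}$ applies. I would handle this by reading $\mathbf{x}_u$ as a function of the event $u$ and noting that $S_C$ consists of distinct vectors. Even without distinctness, equal vectors have equal coordinates by definition, so the $k$-th components agree in any case. With that settled, the rest is immediate.
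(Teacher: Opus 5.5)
Your proof is correct and is essentially the paper's argument in contrapositive form: the paper applies the form $d_{u,k}\neq d_{v,k}\Rightarrow u\neq v$ of Definition~\ref{def:icv} to every pair of indices, while you apply the equivalent $u=v\Rightarrow d_{u,k}=d_{v,k}$ to a shared element. Your closing remark is also valid and gives the shortest proof: $D_{i,k}$ is just the set of $k$-th coordinates of the vectors in $S_i$, so any common vector puts its $k$-th coordinate into both sets.
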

The proof follows directly by applying the new notation to definition~\ref{def:icv}, which tells:
$(i_1, i_2, j_1, j_2),
d_{I_{i_1,j_1},k} \neq d_{I_{i_2,j_2},k} \Rightarrow I_{i_1,j_1} \neq I_{i_2,j_2}$. 

Proposition~\ref{prop:eopwobeosokc} is equivalent to: $\forall k, \bigcap\limits_{i\vert S_i\in A} D_{i,k} = \emptyset \Rightarrow\bigcap\limits_{i\vert S_i\in A} S_i = \emptyset$\\
\begin{remark}
    According to the Proposition~\ref{prop:eopwobeosokc}, if there are multiple intrinsic characteristics, the mutual exclusiveness of the set of the value of any one of the intrinsic characteristics would be enough to exclude the existence of overlap.\\
\end{remark}
\begin{remark}
Mention that the overlap of observations leads to the overlap between all the sets of the intrinsic characteristics, but not vice versa.\\
\end{remark}

\paragraph{From individual values to reported envelopes.}
The set $D_{i,k}$ is the set of individual-level values of characteristic $k$ among the observations used in study $i$.
In a typical evidence synthesis, individual values are not available to the meta-analyst. What is usually available instead are
study-level restrictions reported in the paper or study protocol (e.g., age $18$--$65$, calendar years $2010$--$2019$, ICD codes in a given list, hospitals in a given region). More discussion of it can be found in Appendix \ref{app:aggdata}.

We represent the theoretical counterpart of such reported information by a set $R_{i,k}$ that contain all individual values:
\[
D_{i,k}\subseteq R_{i,k}.
\]
In other words, $R_{i,k}$ is a description of where those values are allowed to lie according to what is reported.

\begin{proposition}[Exclusion of overlap by exclusion of the ranges of intrinsic characteristic]\label{prop:eoobeotrokc}
Denote $R_{i,k} \supseteq D_{i,k} $ as the ranges of $d_{I_{i,j},k}$s. $\forall A \subseteq \Omega$, we have\\
$\exists k, \bigcap\limits_{i\vert S_i\in A} R_{i,k} = \emptyset \Rightarrow \bigcap\limits_{i\vert S_i\in A} S_i = \emptyset$
\end{proposition}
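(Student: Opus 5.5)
I will argue by contraposition, reducing the statement to Definition~\ref{def:icv} and the containment $D_{i,k}\subseteq R_{i,k}$. Fix $A\subseteq\Omega$. If $|A|<2$, the convention $O(A)=\emptyset$ of Definition~\ref{def:ooasos} settles the claim trivially, though the set-theoretic argument below also covers $|A|=1$. So the substantive case is $|A|\ge 2$.

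Suppose $\bigcap_{i\vert S_i\in A} S_i\neq\emptyset$. Pick an element $\mathbf{x}_u$ of this intersection. For every $i$ with $S_i\in A$ there is an index $j_i$ with $I_{i,j_i}=u$, since $\mathbf{x}_u\in S_i$. Now fix an arbitrary $k$. By Definition~\ref{def:icv}, equal indices force equal characteristic values, so $d_{I_{i,j_i},k}=d_{u,k}$ for all such $i$. Hence $d_{u,k}\in D_{i,k}$ for every $i$ with $S_i\in A$. Equivalently, $d_{u,k}\in\bigcap_{i\vert S_i\in A} D_{i,k}$, which is the multi-study form of Proposition~\ref{prop:eopwobeosokc} already noted after that proposition.

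Next I use monotonicity of intersection under inclusion. Since $D_{i,k}\subseteq R_{i,k}$ for each $i$,
\[
d_{u,k}\in\bigcap_{i\vert S_i\in A} D_{i,k}\subseteq\bigcap_{i\vert S_i\in A} R_{i,k},
\]
so $\bigcap_{i\vert S_i\in A} R_{i,k}\neq\emptyset$. Because $k$ was arbitrary, no $k$ has an empty range intersection. This is exactly the contrapositive of the claim.

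There is no real obstacle here. The only point needing care is the well-definedness step: the value $d_{u,k}$ must be the same regardless of which study the observation is viewed from. That is precisely what the intrinsic property $u=v\Rightarrow d_{u,k}=d_{v,k}$ in Definition~\ref{def:icv} guarantees, so I will state it explicitly rather than leave it implicit.
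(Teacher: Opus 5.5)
Your proof is correct and is essentially the paper's argument written in contrapositive form. The paper chains $\bigcap_{i\vert S_i\in A}R_{i,k}=\emptyset\Rightarrow\bigcap_{i\vert S_i\in A}D_{i,k}=\emptyset\Rightarrow\bigcap_{i\vert S_i\in A}S_i=\emptyset$ via $D_{i,k}\subseteq R_{i,k}$ and the multi-study form of Proposition~\ref{prop:eopwobeosokc}, which you simply re-derive inline from Definition~\ref{def:icv}. One small point: the convention $O(A)=\emptyset$ for $|A|<2$ does not bear on the literal conclusion $\bigcap_{i\vert S_i\in A}S_i=\emptyset$ (for $|A|=1$ that set is $S_i$ itself), so the case $|A|=1$ rests on your direct argument, which does cover it.
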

Proof of it can be found in Appendix~\ref{sec:p}.\\
\subsubsection{Potential of overlap}\label{sec:roo}

Proposition~\ref{prop:eoobeotrokc} allows us to make inferences about the qualitative overlap structure among the studies with only ranges of intrinsic characteristics. 

Intuitively, the size of $\bigcap\limits_{i\vert S_i\in A} R_{i,k}$ also contains information about overlap: the smaller the common part of the reported ranges, the lesser amount of overlap is expected; a larger common part leaves more room for overlap.
However, in practice, “size” is hard to define in a uniform way across different kinds of characteristics (continuous time, discrete age groups, ICD codes, regions) and across different reporting formats (intervals, sets, unions of intervals). To obtain a representation that is simple and comparable across characteristics, we map ranges to binary vectors after partitioning the entire range into bins.

More specifically, we code $R_{i,k}$ into binary vectors through following steps:\\
For each dimension $k$,\\
1. Calculate the entire range of all possible value of k-th intrinsic characteristic $R_{\cdot,k}= \bigcup_{i=1}^n R_{i,k}$\\
2. Divide $R_{\cdot,k}$ into $m_k$ ordered and mutually disjoint subsets such that the union of these subsets equals $R_{\cdot,k}$. Denote the l-th subset as $R_{\cdot,k,l}, l\in \{1, 2,..., m_k\}$\\
3. For each study $i$, encode the range $R_{i,k}$ as a binary vector $\mathbf r_{i,k}\in\{0,1\}^{m_k}$ with entries
\[
r_{i,k,l}=\mathbb{I}\!\left\{\,R_{i,k}\cap R_{\cdot,k,l}\neq\emptyset\,\right\},\qquad l=1,\ldots,m_k,
\]
so that $r_{i,k,l}=1$ iff $R_{i,k}$ overlaps the $l$-th reference interval $R_{\cdot,k,l}$.

\begin{proposition}[Excluding pair-wise overlap based on the range vector of intrinsic characteristics]\label{prop:epobotrvokc} 
$\forall i_1, i_2, \\
\mathbf{r}_{i_1,k} \cdot \mathbf{r}_{i_2,k} = 0
\Rightarrow S_{i_1} \cap S_{i_2} = \emptyset$
\end{proposition}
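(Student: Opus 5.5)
The plan is to reduce the statement to Proposition~\ref{prop:eoobeotrokc} (with $A=\{S_{i_1},S_{i_2}\}$). It suffices to show that $\mathbf{r}_{i_1,k}\cdot\mathbf{r}_{i_2,k}=0$ implies $R_{i_1,k}\cap R_{i_2,k}=\emptyset$.

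First, since the entries of the range vectors are binary, $\mathbf{r}_{i_1,k}\cdot\mathbf{r}_{i_2,k}=\sum_{l=1}^{m_k} r_{i_1,k,l}\,r_{i_2,k,l}$ is a sum of nonnegative terms. Hence it vanishes iff for every $l$ at least one of $r_{i_1,k,l}, r_{i_2,k,l}$ is $0$. By the definition of the coding, that means: for every bin $l$, at least one of $R_{i_1,k}\cap R_{\cdot,k,l}$ and $R_{i_2,k}\cap R_{\cdot,k,l}$ is empty.

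Second, I argue by contraposition. Suppose $R_{i_1,k}\cap R_{i_2,k}\neq\emptyset$ and pick a value $d$ in it. Then $d\in R_{i_1,k}\subseteq R_{\cdot,k}=\bigcup_i R_{i,k}$. Because the bins $R_{\cdot,k,1},\ldots,R_{\cdot,k,m_k}$ cover $R_{\cdot,k}$ (step 2 of the coding), there is some $l$ with $d\in R_{\cdot,k,l}$. For this $l$, both $R_{i_1,k}\cap R_{\cdot,k,l}$ and $R_{i_2,k}\cap R_{\cdot,k,l}$ contain $d$. So $r_{i_1,k,l}=r_{i_2,k,l}=1$ and the dot product is at least $1$, a contradiction. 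Therefore $R_{i_1,k}\cap R_{i_2,k}=\emptyset$, and Proposition~\ref{prop:eoobeotrokc} (or directly Proposition~\ref{prop:eopwobeosokc}, via $D_{i,k}\subseteq R_{i,k}$) gives $S_{i_1}\cap S_{i_2}=\emptyset$.

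The only real point is the covering property of the partition, which is what lets a common value of the two ranges be located in a shared bin. Disjointness of the bins is not even needed here. I would also remark that the implication goes one way only: coarse bins can make the dot product positive even when the ranges are disjoint, so the vector criterion is weaker than the range criterion. This is consistent with the statement being only a sufficient condition.
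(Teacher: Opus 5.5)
Your proof is correct and follows essentially the paper's argument. The paper also shows that the vanishing dot product forces $R_{i_1,k}\cap R_{\cdot,k,l}\cap R_{i_2,k}=\emptyset$ for every bin $l$, takes the union over bins together with $R_{i_1,k}\subseteq R_{\cdot,k}$ to get $R_{i_1,k}\cap R_{i_2,k}=\emptyset$, and concludes via $D_{i,k}\subseteq R_{i,k}$ and Proposition~\ref{prop:eopwobeosokc}; your contrapositive element argument is simply a pointwise version of that union-over-bins step.
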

Proof of it can be found in Appendix~\ref{sec:p}.\\

\begin{theorem}[Exclusion of overlapping sample combination]\label{thm:eoosp}
$\forall A \subseteq \Omega$ and for all intrinsic characteristics $k$, we have:\\
$\prod\limits_{k}(\overset{m_k}{\underset{l=1}{\sum}}\prod\limits_{i|S_i \in A} r_{i,k,l})=0 \Rightarrow \bigcap\limits_{S_i\in A} S_i = \emptyset$\\
\end{theorem}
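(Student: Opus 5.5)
The plan is to reduce Theorem~\ref{thm:eoosp} to Proposition~\ref{prop:eoobeotrokc}. The hypothesis is a product over $k$ of nonnegative integers, so it vanishes if and only if some factor vanishes. Hence there is a characteristic $k^\ast$ with $\sum_{l=1}^{m_{k^\ast}}\prod_{i\mid S_i\in A} r_{i,k^\ast,l}=0$. Each summand is a product of $0/1$ entries, hence nonnegative, so every summand is zero. Thus for every bin $l$ there is at least one study $i\in A$ with $r_{i,k^\ast,l}=0$, i.e.\ $R_{i,k^\ast}\cap R_{\cdot,k^\ast,l}=\emptyset$.

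The central step is to show $\bigcap_{i\mid S_i\in A}R_{i,k^\ast}=\emptyset$. I would argue by contradiction. Suppose some value $d$ lies in $R_{i,k^\ast}$ for all $i$ with $S_i\in A$. Then $d\in R_{\cdot,k^\ast}=\bigcup_i R_{i,k^\ast}$. Since the bins $R_{\cdot,k^\ast,l}$ partition $R_{\cdot,k^\ast}$, there is a unique $l_0$ with $d\in R_{\cdot,k^\ast,l_0}$. Consequently $R_{i,k^\ast}\cap R_{\cdot,k^\ast,l_0}\ni d$ is nonempty for every $i$ in $A$, so $r_{i,k^\ast,l_0}=1$ for all such $i$ and $\prod_{i} r_{i,k^\ast,l_0}=1$. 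This contradicts the previous paragraph.

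With $\bigcap_{i\mid S_i\in A}R_{i,k^\ast}=\emptyset$ established, Proposition~\ref{prop:eoobeotrokc} (applied with $k=k^\ast$) gives $\bigcap_{S_i\in A}S_i=\emptyset$. Alternatively, one can argue directly: any $\mathbf{x}_u$ in the intersection would have $d_{u,k^\ast}\in D_{i,k^\ast}\subseteq R_{i,k^\ast}$ for every $i$, by Definition~\ref{def:icv}.

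The argument is elementary, and the only point needing care is the degenerate case $|A|<2$. For $A=\emptyset$ the empty product of $r$'s equals $1$, so the sum over bins is $m_k\ge 1$ (assuming $R_{\cdot,k}$ nonempty) and the hypothesis fails; the statement is vacuous, and $O(\emptyset)=\emptyset$ holds by convention anyway. For $|A|=1$ the conclusion reduces to $S_i=\emptyset$. This is consistent: if the hypothesis holds, $R_{i,k^\ast}$ meets no bin, so $R_{i,k^\ast}=\emptyset$, forcing $D_{i,k^\ast}=\emptyset$ and hence $S_i=\emptyset$. I would state this case explicitly rather than relying on the $O(A)$ convention. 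The key structural facts to make explicit are that the bins cover $R_{\cdot,k}$ (used for existence of $l_0$) and that the $r$'s are $0/1$ (used for the "sum zero implies each term zero" step). Disjointness of the bins is not actually needed.
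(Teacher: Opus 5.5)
Your proof is correct and follows essentially the same route as the paper. The paper derives the theorem directly from Proposition~\ref{prop:eoobeotrokc}, using the same mechanism as its appendix proof of Proposition~\ref{prop:epobotrvokc}: a vanishing factor means every bin misses some $R_{i,k}$, and since the bins cover $R_{\cdot,k}\supseteq R_{i,k}$, this forces $\bigcap_{i\mid S_i\in A}R_{i,k}=\emptyset$. Your pointwise contradiction argument is the elementwise form of the paper's union-over-bins step, and your explicit handling of the product over $k$ and of the cases $|A|<2$ adds care rather than changing the approach.
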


This theorem follows directly from Proposition~\ref{prop:eoobeotrokc}, only substitutes $R_{i,k}$s with the partitioned and coded versions of $R_{i,k}$s .

\begin{definition}[Potential of overlap]\label{roo}
    
    $\forall A \subseteq \Omega$ and for all intrinsic characteristics $k$, given a family of partitions $\mathcal{P}=(\mathcal{P}_1,\ldots,\mathcal{P}_{n_k})$ of
$(R_{\cdot,1},\ldots,R_{\cdot,n_k})$, where $\mathcal{P}_k=(R_{\cdot,k,1},\ldots,R_{\cdot,k,m_k})$. We define the potential of overlap of $A$ :\\
    $\tilde{\pi}_{\mathcal{P}}(A):=\underset{k}{\operatorname{min}}\frac{ \overset{m_k}{\underset{l=1}{\sum}}\prod\limits_{i|S_i \in A} r_{i,k,l}}{m_k-\overset{m_k}{\underset{l=1}{\sum}}\prod\limits_{i|S_i \in A} (1-r_{i,k,l})}$\\
\end{definition}



\paragraph{Interpretation of the potential of overlap.}
Given a partition family $\mathcal P=(\mathcal P_1,\ldots,\mathcal P_{n_k})$ with
$\mathcal P_k=(R_{\cdot,k,1},\ldots,R_{\cdot,k,m_k})$.
Think of the bins $R_{\cdot,k,1},\ldots,R_{\cdot,k,m_k}$ as a coarse grid for the entire range of the
$k$th intrinsic characteristic.

For each study $i$ and characteristic $k$, the reported range $R_{i,k}$ is encoded as a binary vector
$\mathbf r_{i,k}\in\{0,1\}^{m_k}$ where $r_{i,k,l}=1$ if the range $R_{i,k}$ covers the $l$th bin,
and $r_{i,k,l}=0$ otherwise. In other words, $r_{i,k,l}$ simply records whether the $l$th bin is
included in the study's range for that characteristic.

For a study-combination $A\subseteq\Omega$ with $|A|\ge 2$ and a fixed characteristic $k$, define
\[
\tilde{\pi}_{\mathcal P,k}(A)
:=
\frac{ \overset{m_k}{\underset{l=1}{\sum}}\prod\limits_{i|S_i \in A} r_{i,k,l}}{m_k-\overset{m_k}{\underset{l=1}{\sum}}\prod\limits_{i|S_i \in A} (1-r_{i,k,l})}
\in[0,1].
\]
This quantity is easy to read:
\begin{itemize}
\item The numerator $\overset{m_k}{\underset{l=1}{\sum}}\prod\limits_{i|S_i \in A} r_{i,k,l}$ counts the number of bins that are
included by all studies in $A$ (i.e., bins $l$ for which $r_{i,k,l}=1$ for every $S_i\in A$).
\item The denominator $m_k-\overset{m_k}{\underset{l=1}{\sum}}\prod\limits_{i|S_i \in A} (1-r_{i,k,l})$ counts the number of bins
that are included by at least one study in $A$ (i.e., bins $l$ for which $r_{i,k,l}=1$ for
some $S_i\in A$).
\end{itemize}
So $\tilde{\pi}_{\mathcal P,k}(A)$ is the proportion of the ``combined range'' (measured in bins)
that is shared by all studies in $A$.

We then define the \emph{potential of overlap}
\[
\tilde{\pi}_{\mathcal P}(A):=\min_{k=1,\ldots,n_k}\tilde{\pi}_{\mathcal P,k}(A).
\]
We take the minimum because truly shared observations would have to fall into the shared part of the
ranges for every characteristic. If even one characteristic has very little shared range (in
bins), that alone already limits how much overlap between the studies is plausible given the
aggregated information.

\paragraph{Why this is a sensible proxy.}
The only information used here is which bins are covered by each study's reported range.
If, for some characteristic $k$, there is no bin that is covered by all studies in $A$, then
there is no way for an observation to satisfy all studies' range restrictions on that characteristic,
and overlap is ruled out for $A$ (cf.\ Theorem~\ref{thm:eoosp}).
When shared bins do exist, $\tilde{\pi}_{\mathcal P,k}(A)$ summarizes how large the shared part of the
ranges is compared with the overall part of the ranges, in a scale-free way.

\paragraph{Basic properties.}
For any fixed partition family $\mathcal P$:
\begin{itemize}
\item (Range) $\tilde{\pi}_{\mathcal P}(A)\in[0,1]$ for all $A$ with $|A|\ge 2$.
\item (Exclusion) If there exists $k$ such that
$ \overset{m_k}{\underset{l=1}{\sum}}\prod\limits_{i|S_i \in A} r_{i,k,l}=0$, then $\tilde{\pi}_{\mathcal P}(A)=0$,
which is consistent with Theorem~\ref{thm:eoosp}.
\item (Monotonicity in $A$) For a given $k$, when adding more studies to $A$, the number of bins shared by
all studies cannot increase, while the number of bins covered by at least one study cannot decrease.
Hence $\tilde{\pi}_{\mathcal P,k}(A)$ is non-increasing as $A$ grows, and therefore
$\tilde{\pi}_{\mathcal P}(A)$ is also non-increasing in $A$.
\item (Adding characteristics) Adding an additional intrinsic characteristic (i.e., taking the
minimum over more $k$) cannot increase $\tilde{\pi}_{\mathcal P}(A)$.
\item (Dependence on the partition) Using a finer partition typically separates values that were
previously merged into the same bin. This often reduces artificial sharing caused by coarse bins, so
$\tilde{\pi}_{\mathcal P}(A)$ often becomes smaller under finer partitions, although strict
monotonicity is not guaranteed.
\end{itemize}

\paragraph{Conservativeness and limitations.}
$\tilde{\pi}_{\mathcal P}(A)$ is a deterministic function of the reported ranges and the chosen
partitioning scheme. Because binning is a coarsening step, different values that would be distinct on
the individual level may fall into the same bin; this can make the shared part (in bins) look larger
than it truly is. However, without additional assumptions linking individual-level distributions to
the reported ranges, a general inequality such as $\tilde{\pi}_{\mathcal P}(A)\ge \pi(A)$ cannot be
guaranteed. We therefore interpret $\tilde{\pi}_{\mathcal P}(A)$ as a feasibility-based measure of
overlap potential rather than an estimator of the real overlap proportion.

\section{Overlap identification in the real world}
\subsection{Intrinsic VS Key characteristics}
In the theoretical part, we derived a proxy of the upper bound of the portions of overlap, using range of the intrinsic characteristics. Yet intrinsic characteristics and their ranges are theoretical subjects whose real values are unknown in the real world. The individual data used in primary studies are not intrinsic characteristic vectors, but a transformed version of it, often extracted from the so called IPD (Individual Participant Data) in medical research. The transformations happens at steps such as measuring, recording, extracting and cleaning, possibly with losses of resolution (e.g., rounding or categorization).

Once $\mathbf{x}_u$ is defined, each dimension of it has a unique, invariant value. By contrast, different transformed versions of the same intrinsic characteristic may differ in precision (e.g., the number of stored digits), in coarsening rules (e.g., rounding schemes), or in representation (e.g., 12/24-hour clocks or different time zones). They may also differ in how categorical information is partitioned and coded, for instance by using different levels of granularity (e.g., hospital $\to$ postcode $\to$ city $\to$ province $\to$ country).

Motivated by this, we distinguish between the intrinsic characteristic vector of an observation and its transformed version that is used for the analysis of it in a study. For a clear separation between the theoretical value of a variable and the transformed (i.e., observed/recorded/coarsened) version of it in the reality, we denote the transformed value of a variable by appending a prime~($'$).\\

\begin{definition}
Let $u$ uniquely index an underlying observation event. For ${I_{i,j}}=u$, we define $\mathbf{x}'_{I_{i,j}}=(d'_{I_{i,j},k_1}, d'_{I_{i,j},k_2}, ..., d'_{I_{i,j},k_{n'_k}})$ as the key characteristic vector of the event indexed by $u$ in study $i$, where $k_1,\ldots,k_{n'_k}\in{1,\ldots,n_k}$ are distinct indices identifying which intrinsic characteristics are used as key characteristics in study $i$. $d'_{I_{i,j},k}$ are transformed versions of intrinsic characteristics $d_{u,k}$ that are used for the analysis in the observational study $i$. 
\end{definition}
\begin{remark}
Characteristics representing administrative data, database-related or study-related information (e.g., registry identifier, storage timestamp, current insurance status of the patient, author names and publication year etc.) are \textbf{not} key characteristics, because they are not intrinsic to the underlying event and could vary between studies. Thus their cannot be used for overlap inference in our approach.
\end{remark}

\subsection{Potential of overlap based on key characteristics.}
In practice, the meta-analyst observes only study-level restrictions on key characteristics.
For each study $i$ and characteristic $k$, let $R'_{i,k}$ denote the reported range of the key characteristic
(e.g., age in years, calendar time in months, ICD code set at a chosen level), and let
$R'_{\cdot,k}:=\bigcup_{i=1}^n R'_{i,k}$ be the overall reported range across studies.
Choose a partition family $\mathcal P'=(\mathcal P'_1,\ldots,\mathcal P'_{n'_k})$ with
$\mathcal P'_k=(R'_{\cdot,k,1},\ldots,R'_{\cdot,k,m'_k})$ and encode
\[
r'_{i,k,l}=\mathbb I\{R'_{i,k}\cap R'_{\cdot,k,l}\neq\emptyset\},\qquad l=1,\ldots,m'_k.
\]
Define the key-characteristic analogue of the potential of overlap by
\[
\tilde{\pi}'_{\mathcal P'}(A)
:= \underset{k}{\operatorname{min}}\frac{ \overset{m'_k}{\underset{l=1}{\sum}}\prod\limits_{i|S_i \in A} r'_{i,k,l}}{m'_k-\overset{m'_k}{\underset{l=1}{\sum}}\prod\limits_{i|S_i \in A} (1-r'_{i,k,l})}
\]

By assuming that the distortion of the mapping from $d_{I_{i,j},k}$ to $d'_{I_{i,j},k}$ is ignorable given our partition of the aggregated information, we can bridge the gap between theory and practice.


\begin{assumption}[Partition-compatibility of the transformation]\label{ass:partition_compatibility}
For every included observation $u=I_{i,j}$, every characteristic $k$, and every bin $l$,
\[
d'_{I_{i,j},k}\in R'_{\cdot,k,l}\ \Longrightarrow\ d_{u,k}\in R'_{\cdot,k,l}.
\]
Equivalently, relative to the chosen partition $\mathcal P'_k$, the transformation from $d_{u,k}$ to $d'_{I_{i,j},k}$
does not move values across bin boundaries. 
\end{assumption}

\begin{corollary}[Real-world overlap exclusion and properties]\label{cor:real_world_potential}
Under Assumption~\ref{ass:partition_compatibility}, the exclusion statements derived from
Theorem~\ref{thm:eoosp} remain valid when replacing $(R_{i,k},r_{i,k,l})$ by $(R'_{i,k},r'_{i,k,l})$.
In particular, $\tilde{\pi}'_{\mathcal P'}(A)=0$ implies $\bigcap_{S_i\in A}S_i=\emptyset$.

Moreover, the basic properties of the potential of overlap (range in $[0,1]$, monotonicity in $A$,
and non-increase when adding characteristics) hold for $\tilde{\pi}'_{\mathcal P'}(A)$ exactly as before,
since they depend only on the binary encoding.
\end{corollary}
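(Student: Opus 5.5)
The plan is to reduce the real-world statement to the theoretical one by showing that, under Assumption~\ref{ass:partition_compatibility}, the primed bins play the role of a valid partition for the intrinsic values. For each study $i$ and key characteristic $k$, define the intrinsic-value set $D_{i,k}=\{d_{I_{i,j},k}\}$ as in Proposition~\ref{prop:eopwobeosokc}. I would then construct an auxiliary ``range'' $\tilde R_{i,k}:=\bigcup\{R'_{\cdot,k,l}: r'_{i,k,l}=1\}$. This is the union of the primed bins touched by the reported range of study $i$.

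The key step is to check that $D_{i,k}\subseteq \tilde R_{i,k}$. I would argue this as follows:
\begin{itemize}
\item Take $u=I_{i,j}$. Its recorded value $d'_{I_{i,j},k}$ lies in $R'_{i,k}$, since the reported range is a study-level restriction that contains all recorded values, as in the paper's reading of $R_{i,k}$.
\item $d'_{I_{i,j},k}$ also lies in some bin $R'_{\cdot,k,l}$, because the bins partition $R'_{\cdot,k}\supseteq R'_{i,k}$.
\item For that bin, $R'_{i,k}\cap R'_{\cdot,k,l}\neq\emptyset$, so $r'_{i,k,l}=1$.
\item By the assumption, $d_{u,k}\in R'_{\cdot,k,l}\subseteq\tilde R_{i,k}$.
\end{itemize}
Hence $\tilde R_{i,k}$ is a legitimate range of the intrinsic values in the sense of Proposition~\ref{prop:eoobeotrokc}.

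Next I would show that the product criterion transfers. Since the bins are disjoint,
\[
\bigcap_{S_i\in A}\tilde R_{i,k}=\bigcup\Bigl\{R'_{\cdot,k,l}:\prod_{S_i\in A} r'_{i,k,l}=1\Bigr\}.
\]
So if $\sum_l\prod_{S_i\in A} r'_{i,k,l}=0$ for some $k$, this intersection is empty. Proposition~\ref{prop:eoobeotrokc}, applied with ranges $\tilde R_{i,k}$, then gives $\bigcap_{S_i\in A}S_i=\emptyset$. This is exactly the primed version of Theorem~\ref{thm:eoosp}.

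Finally, $\tilde\pi'_{\mathcal P'}(A)=0$ means that the minimum over $k$ of the ratios is $0$. Hence some numerator vanishes, because the denominator is positive whenever at least one study covers a bin, and it is handled by convention otherwise. This yields the stated implication.

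The remaining properties are purely combinatorial in the binary vectors and carry over as in the ``Basic properties'' paragraph:
\begin{itemize}
\item \emph{Range.} The count of bins covered by all studies is at most the count covered by at least one study.
\item \emph{Monotonicity in $A$.} The all-cover count is non-increasing and the any-cover count is non-decreasing as $A$ grows.
\item \emph{Adding characteristics.} A minimum over a larger index set cannot increase.
\end{itemize}

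I expect the main obstacle to be the membership step $d'_{I_{i,j},k}\in R'_{i,k}$. This needs an explicit statement that reported ranges contain the transformed values, i.e.\ $D'_{i,k}\subseteq R'_{i,k}$, the analogue of $D_{i,k}\subseteq R_{i,k}$. I would state it as part of the interpretation of $R'_{i,k}$. A secondary subtlety is that the assumption only constrains bins of $R'_{\cdot,k}$, so the argument must use that every recorded value falls in some bin. Intrinsic values lying outside $R'_{\cdot,k}$ are thereby excluded automatically by the assumption.
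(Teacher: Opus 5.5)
Your proof is correct and follows the paper's intended route. The paper only says that the exclusion statements of Theorem~\ref{thm:eoosp} survive ``replacing $(R_{i,k},r_{i,k,l})$ by $(R'_{i,k},r'_{i,k,l})$'', and it obtains Theorem~\ref{thm:eoosp} itself by applying Proposition~\ref{prop:eoobeotrokc} to the ``partitioned and coded'' ranges, which is exactly your bin-union envelope $\tilde R_{i,k}$; the remaining properties are, as you say, purely combinatorial in the $r'_{i,k,l}$.

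Your version is also more careful than a literal substitution. Assumption~\ref{ass:partition_compatibility} controls only bins, so intrinsic values need not lie in $R'_{i,k}$ itself, only in $\tilde R_{i,k}$. You also correctly make explicit the hypothesis $D'_{i,k}\subseteq R'_{i,k}$ (recorded values lie in the reported envelope), which the paper relies on without stating.
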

\subsection{Example}\label{sec:example_real_world}

This toy example illustrates following points:
(i) if linked individual-level extracted data were available to the meta-analyst, one could in principle identify the overlap exactly;
(ii) across studies, the same intrinsic characteristic (e.g., time) may appear in the extracted data at different granularity (year vs month vs timestamp with time zone);
(iii) in typical evidence synthesis, only study-level envelopes of a few key characteristics are available, and these can be used to compute the overlap potential \(\tilde{\pi}'_{\mathcal P'}\).

\paragraph{A hypothetical individual-level view (not usually available).}
Table~\ref{tab:example_4_studies} shows what the extracted individual-level data could look like in a hypothetical setting where the meta-analyst has access to linked records across studies.
Each row corresponds to an included observation event \(u\), but the recorded variables are study-specific versions \(d'_{I_{i,j},k}\), not the real values \(d_{u,k}\).
Crucially, the extracted representation can differ between studies even for the same intrinsic characteristic:
for example, time is recorded as
a full timestamp with time zone (Study 3),
a month-year value (Study 2),
or only the year (Study 4).

\begin{table}[ht]
\centering
\caption{A hypothetical example of extracted individual-level data (not only key characteristics!) for the four studies.
This table is shown for illustration only; meta-analysts typically do not have access to such individual-level records.}
\label{tab:example_4_studies}
\begin{tabular}{|c|c|c|c|c|c|c|}
\hline
Study & ID & outcome & location of observation& time of observation & group & publication year of paper\\
\hline
1 & a & 1 & area 1 & Aug-18-2021 & Treatment & 2023\\
1 & b & 0 & area 2 & Dec-03-2022 & Control & 2023\\
1 & c & 0 & area 2 & Nov-13-2022 & Control & 2023\\
\hline
2 & d & 1 & area 3 & Feb-2023 & Treatment & 2024\\
2 & c & 0 & area 2 & Nov-2022 & Control & 2024\\
2 & e & 0 & area 1 & Oct-2023 & Control & 2024\\
2 & f & 1 & area 3 & Apr-2023 & Treatment & 2024\\
\hline
3 & f & 1 & area 3 & Apr-18-2023, 4:40 CET & Treatment & 2023\\
3 & b & 0 & area 2 & Dec-03-2022, 5:20 CET & Control & 2023\\
\hline
4 & f & 1 & area 3 & 2023 & Treatment & 2024\\
4 & g & 1 & area 3 & 2023 & Treatment & 2024\\
4 & d & 1 & area 3 & 2023 & Treatment & 2024\\
4 & h & 1 & area 4 & 2023 & Control & 2024\\
4 & i & 0 & area 4 & 2023  & Control & 2024\\
\hline
\end{tabular}
\end{table}

In this constructed example, the IDs \(b,c,d,f\) appear in multiple studies, so the realized overlaps (on the latent event level) are non-zero for some combinations.
For instance, \(S_1\) and \(S_4\) share no events, while \(S_2\) and \(S_4\) share two events (IDs \(d\) and \(f\)).
Of course, in real applications such cross-study linkage is usually not available to the meta-analyst; we include it here only to define the ``ground truth'' overlap proportion \(\pi(A)\) for comparison.

\paragraph{Which characteristics are useful for overlap inference via \texorpdfstring{\(\tilde{\pi}'_{\mathcal P'}\)}{pi-tilde-prime}?}
Not every column in Table~\ref{tab:example_4_studies} is suitable for overlap inference.
A characteristic is not useful for our overlap-potential approach if it is
(a) non-intrinsic (can differ between studies even for the same event), or
(b) not reported in a way that yields comparable study-level envelopes across studies.

For example, ``publication year'' (and also author list, registry identifier, extraction timestamp, etc.) is non-intrinsic and therefore excluded.
Outcome is intrinsic, but it is often not reported as a restriction that defines the eligible sample, and for many problems it provides little exclusion power (e.g., binary outcomes).
In contrast, time and location are common intrinsic characteristics and are often reported as sample restrictions (eligibility windows, regions), making them natural key characteristics for \(\tilde{\pi}'_{\mathcal P'}\).

\paragraph{What is typically available: study-level envelopes of key characteristics.}
In most evidence syntheses, the meta-analyst can extract only aggregated information from the included studies.
For this example, Table~\ref{tab:ad} summarizes the kind of information that is usually accessible: sample size,
and (crucially) reported envelopes for location and time.

\begin{table}[ht]
\centering
\caption{A hypothetical example of aggregated data (constructed from Table~\ref{tab:example_4_studies}) that is typically accessible to meta-analysts. '\#' means 'the number of'.}
\label{tab:ad}
\renewcommand{\arraystretch}{1.2}
\setlength{\tabcolsep}{4pt}
\begin{tabular}{|c|c|c|c|c|c|c|c|}
\hline
sample & \begin{tabular}[c]{@{}c@{}}sample \\ size\end{tabular} & \begin{tabular}[c]{@{}c@{}}range of \\ location\end{tabular} & \begin{tabular}[c]{@{}c@{}}range of \\ time\end{tabular} & \begin{tabular}[c]{@{}c@{}}\# subjects: \\ control\end{tabular} & \begin{tabular}[c]{@{}c@{}}\# subjects: \\ treatment\end{tabular} & \begin{tabular}[c]{@{}c@{}}\# event: \\ control\end{tabular} & \begin{tabular}[c]{@{}c@{}}\# event: \\ treatment\end{tabular} \\
\hline
1 & 3 & \{area 1, area 2\} & \{2021, 2022\} & 2 & 1 & 0 & 1\\
\hline
2 & 4 & \{area 1, area 2, area 3\} & \{2022, 2023\} & 2 & 2 & 0 & 2\\
\hline
3 & 2 & \{area 2, area 3\} & \{2022, 2023\} & 1 & 1 & 0 & 1\\
\hline
4 & 5 & \{area 3, area 4\} & \{2023\} & 2 & 3 & 1 & 3\\
\hline
\end{tabular}%
\end{table}

Formally, we interpret the reported ``range of location'' and ``range of time'' as envelopes of key characteristics:
\[
R'_{i,1}=\text{reported location envelope},\qquad
R'_{i,2}=\text{reported time envelope (here: years)}.
\]
For example, \(R'_{1,1}=\{\text{area 1},\text{area 2}\}\) and \(R'_{4,1}=\{\text{area 3},\text{area 4}\}\), hence
\(R'_{1,1}\cap R'_{4,1}=\emptyset\), which, under Assumption~\ref{ass:partition_compatibility}, already excludes overlap between \(S_1\) and \(S_4\), and also excludes any combination containing both.

\paragraph{Encoding the envelopes and computing \texorpdfstring{\(\tilde{\pi}'_{\mathcal P'}\)}{pi-tilde-prime}.}
Here the overall reported ranges are
\[
R'_{\cdot,1}=\{\text{area 1},\text{area 2},\text{area 3},\text{area 4}\},\qquad
R'_{\cdot,2}=\{2021,2022,2023\}.
\]
Choose partitions \(\mathcal P'_1\) and \(\mathcal P'_2\) given by the singleton bins, and encode
\(r'_{i,k,l}=\mathbb{I}\{R'_{i,k}\cap R'_{\cdot,k,l}\neq\emptyset\}\).
The resulting binary vectors are:

\begin{table}[ht]
\centering
\caption{Binary encoding of the location envelopes $R'_{i,1}$ for the four-study toy example. The overall domain is partitioned into singleton bins $R'_{\cdot,1,l}$ (one per area). For each study $i$ and bin $l$, the entry equals $r'_{i,1,l}=\mathbb{I}\{R'_{i,1}\cap R'_{\cdot,1,l}\neq\emptyset\}$, i.e., it is $1$ if the study's reported location envelope includes that area and $0$ otherwise.}

\label{tab:coding_location_prime}
\begin{tabular}{lccccc}
\toprule
Location & Partition of \(R'_{\cdot,1}\)& \(\mathbf{r}'_{1,1}\) & \(\mathbf{r}'_{2,1}\) & \(\mathbf{r}'_{3,1}\) & \(\mathbf{r}'_{4,1}\) \\
\midrule
area 1       & \(R'_{\cdot,1,1}\) & 1  & 1  & 0  & 0  \\
area 2       & \(R'_{\cdot,1,2}\) & 1  & 1  & 1  & 0  \\
area 3       & \(R'_{\cdot,1,3}\) & 0  & 1  & 1  & 1  \\
area 4       & \(R'_{\cdot,1,4}\) & 0  & 0  & 0  & 1  \\
\bottomrule
\end{tabular}
\end{table}

\begin{table}[ht]
\centering
\caption{Binary encoding of the year envelopes \(R'_{i,2}\) for the four-study example.}
\label{tab:coding_year_prime}
\begin{tabular}{lccccc}
\toprule
Year & Partition of \(R'_{\cdot,2}\) & \(\mathbf{r}'_{1,2}\) & \(\mathbf{r}'_{2,2}\) & \(\mathbf{r}'_{3,2}\) & \(\mathbf{r}'_{4,2}\)\\
\midrule
2021 & \(R'_{\cdot,2,1}\) & 1  & 0  & 0  & 0  \\
2022 & \(R'_{\cdot,2,2}\) & 1  & 1  & 1  & 0  \\
2023 & \(R'_{\cdot,2,3}\) & 0  & 1  & 1  & 1  \\
\bottomrule
\end{tabular}
\end{table}

For \(A_1=\{S_1,S_2\}\),
\[
\tilde{\pi}'_{\mathcal P',1}(A_1)=\frac{2}{3}\quad\text{(shared areas: 1 and 2)},
\qquad
\tilde{\pi}'_{\mathcal P',2}(A_1)=\frac{1}{3}\quad\text{(shared years: only 2022)},
\]
so
\[
\tilde{\pi}'_{\mathcal P'}(A_1)=\min\left\{\frac{2}{3},\frac{1}{3}\right\}=\frac{1}{3}.
\]
In contrast, using the (hypothetical) linked individual-level IDs in Table~\ref{tab:example_4_studies},
the overlap proportion is
\[
\pi(A_1)=\frac{|O(A_1)|}{\left|\bigcup_{S_i\in A_1}S_i\right|}=\frac{1}{6},
\]
since only one event (ID \(c\)) is shared and the pooled sample size is 6.

\paragraph{Comparison across all combinations.}
Table~\ref{tab:roo_prime} compares the overlap potential \(\tilde{\pi}'_{\mathcal P'}(A)\), computed only from the envelopes,
to the overlap proportion \(\pi(A)\), known here only because we constructed a linked toy dataset.

\begin{table}[ht]
\centering
\caption{Comparison between overlap potential $\tilde{\pi}'_{\mathcal P'}(A)$ (computed from study-level envelopes of key characteristics using the chosen partition family $\mathcal P'$) and real overlap proportion $\pi(A)=|O(A)|\big/\left|\bigcup_{S_i\in A}S_i\right|$ in the four-study toy example. The overlap set $O(A)$ and $\pi(A)$ are shown only because the toy data include linked IDs (Table~\ref{tab:example_4_studies}); in real applications, meta-analysts typically observe only the envelopes used to compute $\tilde{\pi}'_{\mathcal P'}(A)$.}

\label{tab:roo_prime}
\begin{tabular}{|c|c|c|c|c|}
\hline
NR. & \(A\) & \(O(A)\) & \(\tilde{\pi}'_{\mathcal P'}(A)\) & \(\pi(A)\) \\
\hline
1 & \(\emptyset\) & \(\emptyset\) & / & / \\
2 & \(\{S_1\}\) & \(\emptyset\) & 0 & 0\\
3 & \(\{S_2\}\) & \(\emptyset\) & 0 & 0\\
4 & \(\{S_3\}\) & \(\emptyset\) & 0 & 0\\
5 & \(\{S_4\}\) & \(\emptyset\) & 0 & 0\\
\hline
6 & \(\{S_1,S_2\}\) & \(\{\mathbf{x}_c\}\) & \(1/3\) & \(1/6\)\\
7 & \(\{S_1,S_3\}\) & \(\{\mathbf{x}_b\}\) & \(1/3\) & \(1/4\)\\
8 & \(\{S_1,S_4\}\) & \(\emptyset\) & 0 & 0\\
9 & \(\{S_2,S_3\}\) & \(\{\mathbf{x}_f\}\) & \(2/3\) & \(1/5\)\\
10 & \(\{S_2,S_4\}\) & \(\{\mathbf{x}_d,\mathbf{x}_f\}\) & \(1/4\) & \(2/7\)\\
11 & \(\{S_3,S_4\}\) & \(\{\mathbf{x}_f\}\) & \(1/3\) & \(1/6\)\\
12 & \(\{S_1,S_2,S_3\}\) & \(\emptyset\) & \(1/3\) & 0\\
13 & \(\{S_1,S_2,S_4\}\) & \(\emptyset\) & 0 & 0\\
14 & \(\{S_1,S_3,S_4\}\) & \(\emptyset\) & 0 & 0\\
15 & \(\{S_2,S_3,S_4\}\) & \(\{\mathbf{x}_f\}\) & \(1/4\) & \(1/8\)\\
16 & \(\{S_1,S_2,S_3,S_4\}\) & \(\emptyset\) & 0 & 0\\
\hline
\end{tabular}
\end{table}

The role of \(\tilde{\pi}'_{\mathcal P'}(A)\) is thus best interpreted as a feasibility-based measure of overlap potential:
values equal to \(0\) provide strong exclusion evidence under Assumption ~\ref{ass:partition_compatibility}(Corollary~\ref{cor:real_world_potential}),
while positive values indicate that overlap is possible with the reported envelopes but not guaranteed.
Moreover, because the construction depends on the chosen partitions and on which key characteristics are available,
\(\tilde{\pi}'_{\mathcal P'}(A)\) should not be used as a universal upper bound on \(\pi(A)\).
\subsection{Visualization of overlap potential}\label{sec:viz}

The key-characteristic overlap potential
\(\tilde{\pi}'_{\mathcal P'}(A)\) is a set function on study-combinations \(A\subseteq\Omega\) (with \(|A|\ge2\)),
computed from reported envelopes \(R'_{i,k}\) after choosing a partition family \(\mathcal P'\).
Any visualization therefore is conditional on the selected key characteristics and on \(\mathcal P'\).

We suggest two complementary visualizations.

\paragraph{Pairwise heat map}
A heat map provides a compact overview of which study pairs have
large shared envelope mass (in bins) and which pairs are excluded.
This visualization is intuitive and often captures a substantial part of the practical overlap risk,
even though overlap is genuinely multivariate and cannot be fully summarized by pairwise information
(cf.\ Section~\ref{sec:os}).

For the four-study example (Table~\ref{tab:roo_prime}), the heat map
is shown in Figure~\ref{fig:4studyexampleheatmap_prime}.

\begin{figure}[!htbp]
\centering
\includegraphics[width=1\textwidth]{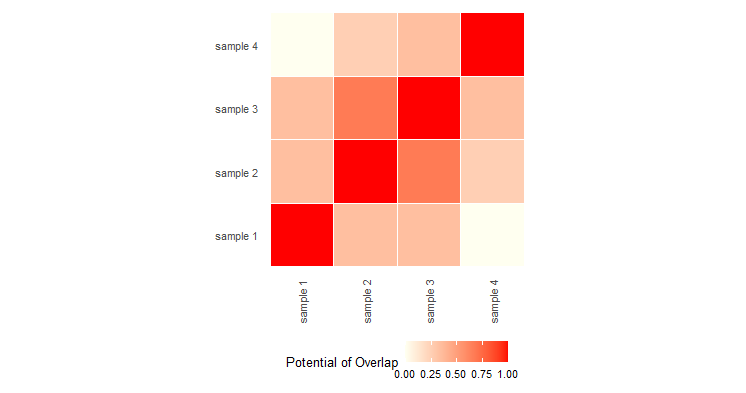}
\caption{Pairwise overlap-potential heat map for the four-study toy example. Cell $(i,j)$ shows $\tilde{\pi}'_{\mathcal P'}(\{S_i,S_j\})$, computed from study-level envelopes of the key characteristics (location and time) at the chosen partition resolution. Values equal to $0$ indicate overlap is excluded under Assumption~\ref{ass:partition_compatibility}; larger values indicate greater overlap compatibility given the reported envelopes.}

\label{fig:4studyexampleheatmap_prime}
\end{figure}

\paragraph{Combination grid plot (UpSet-plot-style)}
To visualize multivariate overlap structure, we plot \(\tilde{\pi}'_{\mathcal P'}(A)\) for selected non-empty combinations \(A\)
(e.g., all combinations with \(\tilde{\pi}'_{\mathcal P'}(A)>0\) in small examples, or the top \(K\) combinations in larger examples).
A practical visualization is a ``grid plot'' inspired by UpSet-plot suggested in \cite{Lex2014UpSet}. Each column corresponds to one combination \(A\),
rows correspond to studies \(S_1,\ldots,S_n\), and a filled cell indicates membership \(S_i\in A\).
Columns are ordered by decreasing \(\tilde{\pi}'_{\mathcal P'}(A)\), and combinations with \(\tilde{\pi}'_{\mathcal P'}(A)=0\)
may be omitted for conciseness.

In the four-study example, all non-zero combinations from Table~\ref{tab:roo_prime} are shown in
Figure~\ref{fig:4studyexample_grid_prime}. For larger \(n\), showing all \(2^n-n-1\) combinations is not meaningful;
in that case we recommend visualizing only the combinations with the largest \(\tilde{\pi}'_{\mathcal P'}(A)\).

\begin{figure}[!htbp]
\centering
\includegraphics[width=1\textwidth]{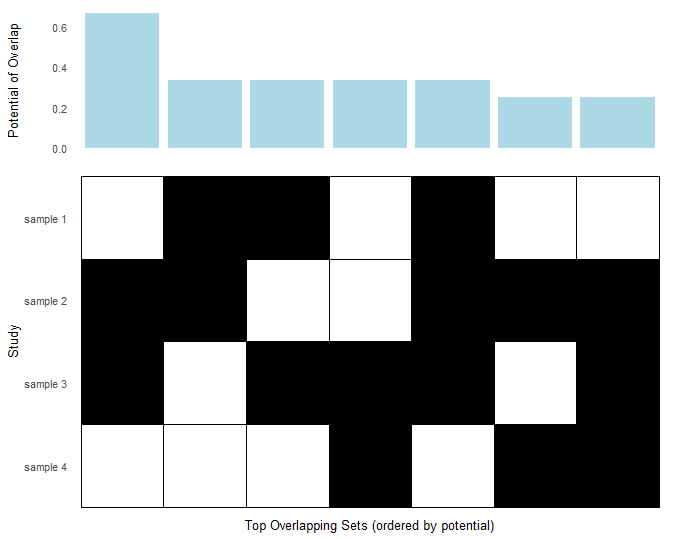}
\caption{Grid plot of non-zero overlap potentials in the four-study toy example (Table~\ref{tab:roo_prime}). Each column represents a study-combination $A\subseteq\Omega$; filled cells indicate which studies are included in $A$. Columns are ordered by decreasing $\tilde{\pi}'_{\mathcal P'}(A)$, and only combinations with $\tilde{\pi}'_{\mathcal P'}(A)>0$ are shown.}

\label{fig:4studyexample_grid_prime}
\end{figure}

\paragraph{Interpretation reminder.}
In both plots, \(\tilde{\pi}'_{\mathcal P'}(A)=0\) provides exclusion evidence under Assumption~\ref{ass:partition_compatibility}
(Corollary~\ref{cor:real_world_potential}).
Positive values indicate that overlap is possible with the reported envelopes (given \(\mathcal P'\)),
but do not imply actual overlap.

\section{Utilization of the potential of overlap}\label{sec:uotroo}

\subsection{Overlap-free sample combination with the maximum sample size}\label{sec:ofs}
Based on the potential of overlaps, we can derive subsets of $\Omega$ for which the aggregation of all the study samples within them are free of any overlaps. As the sample sizes of the studies are derivable from aggregated data, we can go a step further and choose the subset of study samples with the largest aggregated sample size to perform a new meta-analysis that is completely free of the impact of overlap.\\

More specifically, we conduct the following steps to find the set of studies:\\
1. We find the set $B_0$ of all the sets of study samples whose potential of overlaps are zero.\\
$$B_0:=\{A\vert A\in 2^\Omega, \tilde{\pi}(A)=0 \}$$
2. In $B_0$, we keep only the sets that have all their subsets also in $B_0$, and denote the new set as $B_1$.\\
3. In $B_1$, we keep only the sets who do not have a proper superset in $B_1$, and denote it as $B_2$.\\
4. Choose the set in $B_2$ that has the largest sample size. \\

\textbf{Example:}\\
From the 4-studies example, we list the result of the steps :\\
Step1: $B_0=\{\emptyset, \{S_1\}, \{S_2\}, \{S_3\}, \{S_4\},\{S_1, S_4\}, \{S_1, S_2, S_4\}, \{S_1, S_3, S_4\}, \{S_1, S_2, S_3, S_4\}\}$, as seen in Table~\ref{tab:roo_prime}\\
Step2: $B_1=\{\emptyset, \{S_1\}, \{S_2\}, \{S_3\}, \{S_4\},\{S_1, S_4\}\}$\\
Step3: $B_2=\{\{S_2\}, \{S_3\}, \{S_1, S_4\}\}$\\
Step4: $\{S_1, S_4\}$ is chosen because it has the largest sample size (3+5=8), as can be derived from aggregated data in Table~\ref{tab:ad}. We can add the sample size directly, because step 2 ensures that.\\
In our example, the original meta-analysis based on $\{\{S_1\}, \{S_2\}, \{S_3\}, \{S_4\}\}$ would have a spurious sample size of 3+4+2+5=14, and because of the considerable overlap, the actual meta-analysis sample size should be 9 instead. By applying the process we have a new overlap-free meta-analysis ($\{S_1, S_4\}$) of population size 8.\\

The meaning of calculating the overlap-free sample combination with the maximum sample size is that we can do meta-analysis based on that sample combination and compare with the original meta-analysis to evaluate the impact of overlap.\\
The considerable overlap between the study samples in our example makes the application of the process necessary and helpful. But in some cases, excluding the studies might mean throwing away too much information. We thus suggest an alternative way to look at the problem in Section~\ref{sec:eub}.\\

\subsubsection{Alternatives to sample-size maximization.}
When selecting a final overlap-free study combination from $B_2$ (Section~\ref{sec:ofs}), maximizing the total sample size is only one possible criterion. Depending on the goal of the evidence synthesis, other reasonable criteria include (but are not limited to) (i) minimizing between-study heterogeneity, (ii) maximizing the number of included studies, (iii) maximizing the total inverse-variance weight, and (iv) maximizing the incorporated risk-of-bias or reporting-quality scores. We could also use them in combination, for example by imposing minimum requirements (e.g., at least $m$ studies) and then selecting among the remaining candidates using a secondary criterion (e.g., lower heterogeneity), or by combining several criteria into a single score.

\textbf{Example:}\\
Suppose we have an evidence synthesis of 7 studies, where $$B_0^c:=\{A\vert A\in 2^\Omega, \tilde{\pi}'_{\mathcal P'}(A)\neq 0 \}=\{\{S_1, S_3\}, \{S_2, S_3\}, \{S_3, S_4\}, \{S_3, S_6\},\{S_4, S_7\}\}$$.\\
Step~1: $B_0$ consists of the sets marked with green in Figure~\ref{fig:7studyb0} in the appendices .\\
Step~2: $B_1$ consists of the sets marked with green in Figure~\ref{fig:7studyb1}.\\
Step~3: $B_2$ consists of the sets marked with green in Figure~\ref{fig:7studyb2}, which are $\{S_3, S_5, S_7\}$, $\{S_1, S_2, S_4, S_5, S_6\}$ and $\{S_1, S_2, S_5, S_6, S_7\}$\\
Step~4: We then evaluate these candidates using the chosen selection criteria. For instance, we can compute $\tau^2$ (or $I^2$) for each candidate and select the combination with lower estimated heterogeneity, while simultaneously enforcing other practical preferences (e.g., choosing among combinations with similar heterogeneity the one including more studies, or the one with larger total inverse-variance weight).

\subsection{Proxy of the lower bound of the meta-analysis sample size}\label{sec:eub}
When degrees of overlap is low between study samples, we might not want to simply exclude studies, as it could significantly shrink the meta-analysis sample size. An alternative is to construct a proxy for a lower bound for the effective meta-analysis sample size
\[
|S|=\left|\bigcup_{S_i\in\Omega} S_i\right|.
\]

\paragraph{A second-order bound from the principal of inclusion--exclusion.}
For any finite sets $\Omega = \{S_1,\ldots,S_n\}$, the principle of inclusion and exclusion gives
\[
\left|\bigcup_{S_i\in\Omega} S_i\right|
=
\sum_{A\subseteq\Omega} (-1)^{|A|+1}\left|\bigcap_{S_j\in A} S_j\right|.
\]

The potential of overlaps are proxies of the portions of overlaps $\pi(A)$. As we do not have non-trivial estimations of the lower bounds of $\pi(A)$s---$\pi(A)$ has a trivial lower bound of 0---we discard the terms in the sum for $\vert A \vert > 2$.
Truncating the expansion of the sum after the pairwise terms yields:
\begin{equation}\label{eq:bonferroni2}
|S|
\ge
\sum_{i=1}^n |S_i|
-
\sum_{1\le i<j\le n} |S_i\cap S_j|.
\end{equation}

This is equivalent to:
\begin{equation}\label{eq:population_size_PO}
    \vert \bigcup\limits_{i|S_i \in \Omega} S_i \vert \geq \sum\limits_{i|S_i \in \Omega} n_i - \sum\limits_{A\vert A \subseteq \Omega, \vert A \vert=2} (\frac{\pi(A)}{1+\pi(A)}\cdot\sum\limits_{j|S_j \in A} n_j)
\end{equation}
Proof of it can be found in Section~\ref{sec:p}.\\
By substitute the $\pi(A)$ on the RHS of the Equation~\ref{eq:population_size_PO} with $\tilde{\pi}'_{\mathcal P'}(A)$, and consider that the overlap of a pair of sample cannot be larger than the smaller sample of the two, we obtain following proxy for the lower bound of the meta-analysis sample size $\vert \bigcup\limits_{i|S_i \in \Omega} S_i \vert$:

\paragraph{Proxy of the lower bound of the meta-analysis sample size}
\begin{equation}
\widetilde{|S|}_{\mathrm{LB}}=\sum\limits_{i|S_i \in \Omega} n_i - \sum\limits_{A\vert A \subseteq \Omega, \vert A \vert=2} min((\frac{\tilde{\pi}'_{\mathcal P'}(A)}{1+\tilde{\pi}'_{\mathcal P'}(A)}\cdot\sum\limits_{j|S_j \in A} n_j), \min_{j|S_j \in A}n_j)
\end{equation}
The two $min$ operators make sure that the estimated pair-wise overlapping part do not exceed the sample size of the smaller sample of the pair.\\

\begin{remark}[Interpretation and limitation]\label{rmk:lb_interpretation}
The quantity $\widetilde{|S|}_{\mathrm{LB}}$ is deterministic given $n_i$ and the pairwise potential
$\tilde{\pi}'_{\mathcal P'}(\{S_i,S_j\})$.
Because $\tilde{\pi}'_{\mathcal P'}$ is a feasibility-based measure and is not guaranteed to satisfy
$\tilde{\pi}'_{\mathcal P'}(\{S_i,S_j\})\ge \pi(\{S_i,S_j\})$ in full generality, $\widetilde{|S|}_{\mathrm{LB}}$
should be interpreted as a conservative proxy for a lower bound rather than a universal guaranteed bound.
\end{remark}
\textbf{Example:}\\
From the 4-studies example, we list detailed calculation process below:\\
$$\widetilde{|S|}_{\mathrm{LB}}=\sum\limits_{i|S_i \in \Omega} n_i - \sum\limits_{A\vert A \subseteq \Omega, \vert A \vert=2} min((\frac{\tilde{\pi}'_{\mathcal P'}(A)}{1+\tilde{\pi}'_{\mathcal P'}(A)}\cdot\sum\limits_{j|S_j \in A} n_j), \min_{j|S_j \in A}n_j)$$
\begin{align*}
&=14
- \min\!\left\{\frac{\tfrac13}{1+\tfrac13}\cdot 7,\,3\right\}
- \min\!\left\{\frac{\tfrac13}{1+\tfrac13}\cdot 5,\,2\right\}
- \min\!\left\{0,\,3\right\} \\
&\quad
- \min\!\left\{\frac{\tfrac23}{1+\tfrac23}\cdot 6,\,2\right\}
- \min\!\left\{\frac{\tfrac14}{1+\tfrac14}\cdot 9,\,4\right\}
- \min\!\left\{\frac{\tfrac13}{1+\tfrac13}\cdot 7,\,2\right\} \\
&= 14
- \min\!\left\{\tfrac14\cdot 7,\,3\right\}
- \min\!\left\{\tfrac14\cdot 5,\,2\right\}
- 0
- \min\!\left\{\tfrac{2}{5}\cdot 6,\,2\right\}
- \min\!\left\{\tfrac{1}{5}\cdot 9,\,4\right\}
- \min\!\left\{\tfrac14\cdot 7,\,2\right\} \\
&= 14 - \min\{1.75,3\} - \min\{1.25,2\} - 0
    - \min\{2.4,2\} - \min\{1.8,4\} - \min\{1.75,2\} \\
&= 14 - 1.75 - 1.25 - 0 - 2 - 1.8 - 1.75 \\
&= 5.45.
\end{align*}

\begin{remark}\label{rmk:uopifpr}
In the formula for the estimation of the lower bound, only pair-wise potential of overlap were used. This enables a much faster calculation ($n^2$ vs. $2^n$), at the cost of information lost, which makes the lower bound over-conservative in the cases where there are overlap in combinations of 3 or more samples. We thus suggest to use it in combination with the 'Grid plot' when possible.
\end{remark}

\section{Case Studies}

\paragraph{Implementation steps (shared across cases).}
For each case study, we (i) selected key characteristics that were consistently reported across studies,
(ii) defined a reporting resolution by partitioning each global domain into bins,
(iii) encoded each study’s reported range as a set of bin indices, and
(iv) computed overlap potential $\tilde{\pi}'_{\mathcal P'}(A)$. 

For case 1 and case 2 we also derived overlap-free candidate sets.
Because $\tilde{\pi}'_{\mathcal P'}(A)$ depends on the chosen partitions, we report all partitions and the vectorization of them in respective figures, to ensure reproducibility. For readability, we put the large size figures in the appendix.
\paragraph{Notes on calculation efficiency}
In our implementation with R, we did not calculate the potential of overlap exactly one combination by one combination, as it grows exponentially with the number of studies($2^{51}$ with 51 studies!). Instead, we divide the whole calculation into multiple steps and prune the combinations('branches') with zero overlap potential at each step. Because this is not the central topic of this paper, we will not explain it in full detail.

\subsection{Case 1(a): Ethnicity and clinical outcomes in COVID-19}
Fifty-one studies were included in the systematic review. The key characteristics chosen here are the location of observation (locations in UK and US) and the setting of the hospital (community or Hospital). \cite{Sze2020}\\
The ranges of the key characteristics and the sample sizes were extracted and the results are shown in the Figure~\ref{fig: vectorization_51_studies}.\\
\begin{figure}[!htbp]
\includegraphics[width=1\textwidth]{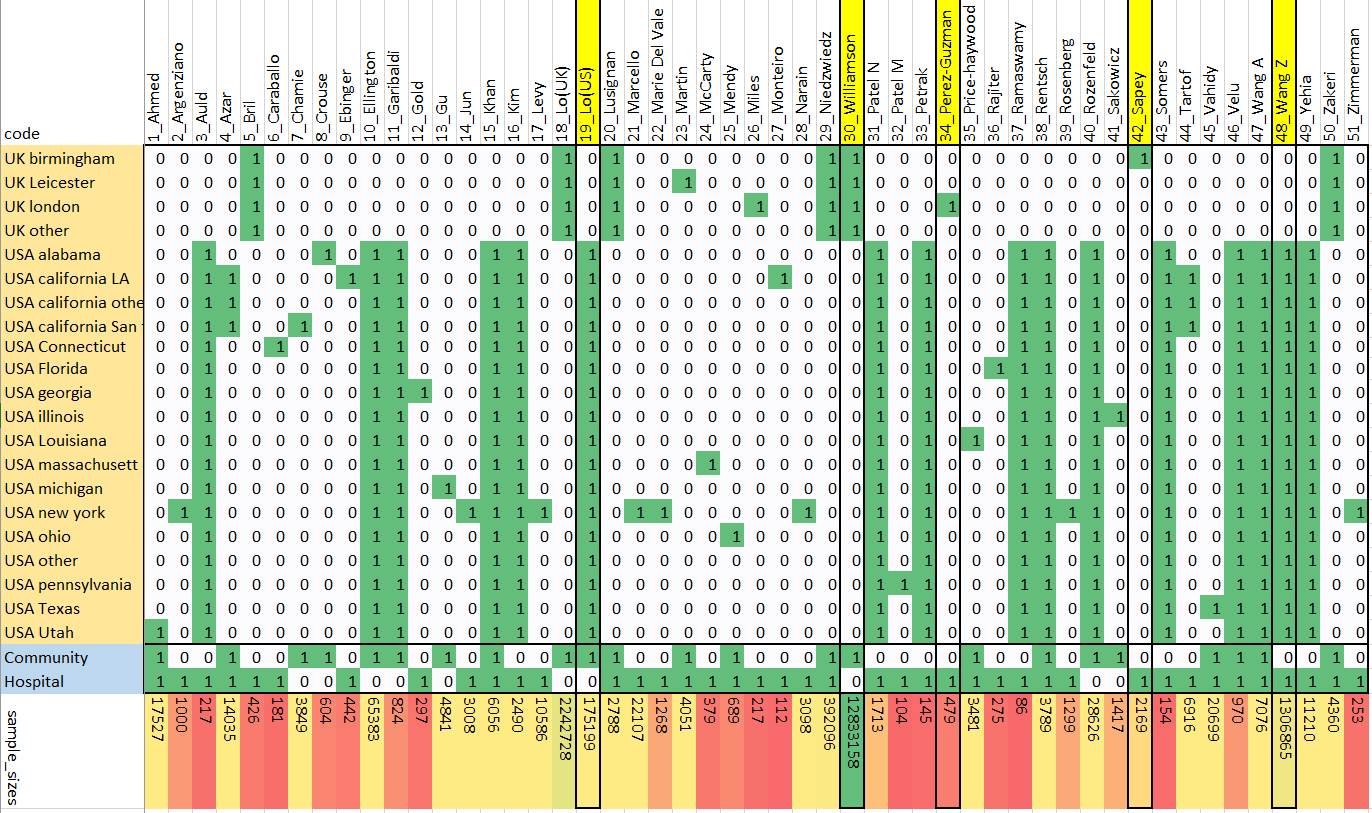}
    \caption{Case~1(a) (Sze et al.): Binary encoding (``vectorization'') of reported key-characteristic envelopes for 51 studies. For each study (column), rows correspond to bins of the chosen key characteristics (geographic location and hospital setting) at the specified reporting resolution; entries indicate whether the study's reported envelope intersects each bin. Studies highlighted in yellow belong to the selected overlap-free candidate set with the largest total sample size (Section~\ref{sec:ofs}).}

    \label{fig: vectorization_51_studies}
\end{figure}

We calculated the potential of overlaps for all combinations of study samples (as in Section~\ref{sec:roo}). At least 50 sets of studies have potential of overlaps of one, which means their ranges of characteristics are identical(at the chosen reporting resolution). In this case, manually checking the sample combinations one by one would be impractical. Instead, we used the algorithm introduced in Section~\ref{sec:ofs} to calculate the overlap-free sample combination with the largest sample size. The result is a set of 5 study samples $\{$19$\_$Lo(US), 30$\_$Williamson, 34$\_$Perez-Guzman, 42$\_$Sapey, 48$\_$Wang Z$\}$, which are marked in the Figure~\ref{fig: vectorization_51_studies}. Although we have only 5 studies after applying this algorithm, they have a total sample size of 14,317,870. Compared with the original total sample size of 17,211,742 from 51 studies, this means a reduction of $16.8\%$. In other words, we obtain an overlap-free meta-analysis sample, for which we can claim that the results of a systematic review on it is very unlikely to be biased (or underestimate the standard error) due to sample overlap, at the cost of a $16.8\%$ reduction in sample size.\\
The proxy of the lower bound of the meta-analysis sample size, defined in Section~\ref{sec:eub} is 13,482,612. In this case, the proxy is even lower than the 14,317,870 of the overlap-free sample combination with the largest sample size, which makes the proxy superfluous. This is partly due to the large amount of the study samples and the high degrees of overlap between the sample characteristics, which causes the pair-wise overlaps to be subtracted too many times.\\

\subsubsection{Case 1(b): Subset of case 1(a)}
The method can also be applied to any subset of the studies included in a meta-analysis. For example, if we perform the overlap-analysis on the first 10 studies (as were ordered alphabetically according to the name of the first author) instead of all 51, we obtain the following results: the summed sample size is 103,664; the graph of the potential of overlaps is shown in Figure~\ref{fig: 10studies_case}; the overlap-free sample combination with the largest sample size is $\{$2$\_$Argenziano, 5$\_$Bril, 6$\_$Caraballo, 9$\_$Ebinger, 10$\_$Ellington$\}$ with a sample size of 67,432; and the proxy of the lower bound of the meta-analysis sample size is 77,844(rounded). Compared to Case 1(a), this proxy is larger than the largest sample size among overlap-free sample combinations. In such cases, using the proxy for further analysis can potentially enable a higher statistical power in the results.\\
\begin{figure}[!htbp]
\includegraphics[width=0.8\textwidth]{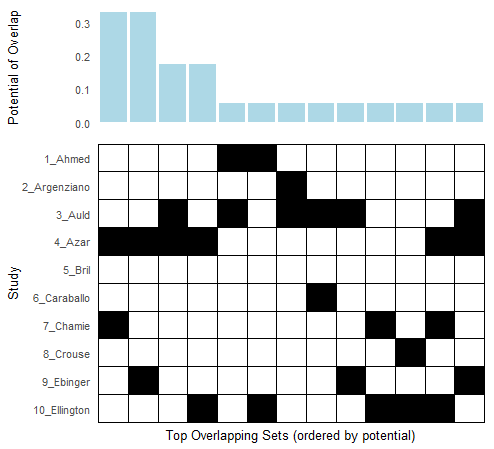}
    \caption{Case~1(b) (subset of Sze et al.): Visualization of overlap potentials for the first 10 included studies. The plot summarizes study-combinations with large values of $\tilde{\pi}'_{\mathcal P'}(A)$ under the chosen key characteristics and partition.}
    \label{fig: 10studies_case}
\end{figure}

\subsection{Case 2: Association of COVID-19 Pandemic Measures With Cancer Screening}
We analyzed the potential of overlap in the systematic review in the paper 'Global Association of COVID-19 Pandemic Measures With Cancer Screening A Systematic Review and Meta-analysis', \cite{Teglia2022}.

The paper investigated the pandemic’s association with cancer screening worldwide. Thirty-nine studies that reported data from cancer registries that compared the number of screening tests performed before and during the pandemic for breast, cervical, and colorectal cancer were included.

Considering the available information, we chose 'location', 'time' and 'disease' as the key characteristics. Then we extracted the ranges of the key characteristics for each study, selected appropriate partitions and performed vectorization. The results of the vectorization can be seen in the Figure~\ref{fig: vectorized_teglia}.

Looking at the sample combinations with the highest potential of overlap in Figure~\ref{fig: stepone_teglia_39_50}, we can see that study combination number 6 and number 17 have a potential of overlap of 1, which means their ranges of all key characteristics are identical(at the chosen reporting resolution). 

We then manually investigated the combinations with the highest potential of overlap and were able to find out that study 17 by Fillon et al. was a review of the findings presented in study 6 by McBain et al. This means the two articles actually used the identical samples.  

Such a discovery would require much more effort without our method, as checking pairwise relations alone would mean checking 210 non-zero combinations of studies. 

\subsection{Case 3: Variations in reported epidemiology of major lower extremity amputation}
We looked into another systematic review "Understanding variations in reported epidemiology of major lower extremity amputation in the UK: a systematic review" \cite{Meffen2021}. The overlap handling in this article was also investigated in \cite{Hussein2022}. Both the original authors and Hussein et al. reached the conclusion that the available information is insufficient to assess the degree of overlap. That makes it impossible to exclude articles based on overlap, and thus only a narrative summary was provided instead of a statistical data synthesis.

Our method provides a practical way to generate overlap-free candidate study sets (at the chosen reporting resolution),
which can then be used for sensitivity analyses or for constructing multiple non-overlapping meta-analyses (Section~\ref{sec:ofs}).

We selected the 'time of the observation' and the 'location of the observation' as the key characteristics and performed the vectorization, as shown in Figure~\ref{fig: Meffen_Vectorization}. The sample combinations with the highest potential of overlap are shown in Figure~\ref{fig: meffen_all_27}. We were able to calculate the set of overlap-free study combinations ($B_2$, as denoted in Section~\ref{sec:ofs}, comprising 298 combinations), which we included as supplementary material. From there, further steps can be taken to narrow down the study combinations to take, or we can perform multiple meta analyses, as suggested in Section~\ref{sec:ofs}.
\section{Conclusions and Discussion}
We proposed a model that adequately describes the overlap relationships among study samples in evidence synthesis. We introduced a quantity called the "potential of overlap", which indicates the degree of overlap among a given set of studies. We described the theory behind our method using set theory and the coding of basic information about the study samples. Our method is designed to be viable even when only very limited information about the sample is available, for example, when only the study papers are available, which is often the case in practice. 

As an example to demonstrate the practical value of the method, we designed an algorithm based on "potential of overlap" that finds out the overlap-free sets of samples with the largest sample size. This set can be used for further analysis, providing valuable insight of the potential impact of sample overlap. Similarly, based on this method, we showed it is also possible to estimate the lower bound of the overlap-free sample size of an evidence synthesis.

We applied our methods to existing evidence synthesis in this paper and were able to confirm the viability of our method in complex real-world scenarios. The results also demonstrate the necessity of such overlap analysis due to the high potential of overlaps we discovered using our method in the investigated cases.



Our methods can serve as a basis for further research on evaluating the impact of sample overlap as well as on the correction of evidence synthesis results regarding the overlap. More specifically, for meta-analysis with binary outcomes, we might be able to adjust the variance and the effect estimators. The lower bound of the overlap-free sample size can be used to adjust the estimators for sensitivity analysis in the context of best-case and worst-case scenarios. In that case, we might need to choose proper assumption on the sample characteristics to enable the calculation of a one-sided confidence interval for the proxy/estimator of the lower bound of the overlap-free sample size. If the precision of the proxy/estimator needs to be improved, we could consider different assumptions about the distribution of the characteristics or consider other function for the definition of the potential of overlap to adapt to different situations. In the long term, this research may improve the standardization and the transparency of overlap-handling, thereby further improving the credibility of evidence synthesis.

\section*{Statements and Declarations}
\begin{itemize}
\item Funding : No funding was received to assist with the preparation of this manuscript.
\item Conflict of interest/Competing interests (check journal-specific guidelines for which heading to use): The authors have no relevant financial or non-financial interests to disclose.
\item Ethics approval and consent to participate: Not relevant.
\item Consent for publication: 
\item Data availability: Codes that are used for the results are available upon request.
\item Materials availability: Not relevant.
\item Code availability: Codes that are used for the results are available upon request.
\item Author contribution:
Conceptualization: Tim Mathes, Zhentian Zhang\\
Methodology: Tim Mathes, Zhentian Zhang\\
Formal analysis, investigation: Zhentian Zhang\\
Writing - original draft preparation: Zhentian Zhang\\
Writing - review and editing: Tim Mathes, Zhentian Zhang\\
Funding acquisition: 
Supervision: Tim Mathes, Tim Friede

\end{itemize}
\section{MSC code}
We suggest 62P10-Applications of statistics to biology and medical sciences; meta analysis as the classification code.

\bibliography{sn-bibliography}

@article{Bom2020,
  author  = "Bom, P.R.D. and Rachinger, H.",
  title   = "A generalized-weights solution to sample overlap in meta-analysis",
  year    = "2020",
  journal = "Research Synthesis Methods",
  volume  = "11",
  pages   = "812--832",
  doi     = "10.1002/jrsm.1441",
  url     = "https://doi.org/10.1002/jrsm.1441"
}

@article{Han2016,
  author  = "Han, B. and others",
  title   = "A general framework for meta-analyzing dependent studies with overlapping subjects in association mapping",
  year    = "2016",
  journal = "Human Molecular Genetics",
  volume  = "25",
  pages   = "1857--1866",
  doi     = "10.1093/hmg/ddw049",
  note    = "Epub 2016 Feb 21"
}

@article{Hussein2022,
  author  = "Hussein, H. and others",
  title   = "Double-counting of populations in evidence synthesis in public health: a call for awareness and future methodological development",
  year    = "2022",
  journal = "BMC Public Health",
  volume  = "22",
  pages   = "1827",
  doi     = "10.1186/s12889-022-14213-6"
}

@article{Jin2020,
  author  = "Jin, Q. and Shi, G.",
  title   = "Meta-Analysis of SNP-Environment Interaction With Overlapping Data",
  year    = "2020",
  journal = "Frontiers in Genetics",
  volume  = "10",
  pages   = "1400",
  doi     = "10.3389/fgene.2019.01400",
  url     = "https://doi.org/10.3389/fgene.2019.01400"
}

@article{Lin2009,
  author  = "Lin, D.-Y. and Sullivan, P.F.",
  title   = "Meta-Analysis of Genome-wide Association Studies with Overlapping Subjects",
  year    = "2009",
  journal = "American Journal of Human Genetics",
  volume  = "85",
  pages   = "862--872",
  doi     = "10.1016/j.ajhg.2009.11.001"
}

@article{Lunny2021,
  author  = "Lunny, C. and Pieper, D. and Thabet, P. and Kanji, S.",
  title   = "Managing overlap of primary study results across systematic reviews: practical considerations for authors of overviews of reviews",
  year    = "2021",
  journal = "BMC Medical Research Methodology",
  volume  = "21",
  pages   = "140",
  doi     = "10.1186/s12874-021-01269-y",
  url     = "https://doi.org/10.1186/s12874-021-01269-y"
}

@article{Mathes2023,
  author  = "Mathes, T. and others",
  title   = "Systematic reviews and meta-analyses that include registry-based studies: methodological challenges and areas for future research",
  year    = "2023",
  journal = "Journal of Clinical Epidemiology",
  volume  = "156",
  pages   = "119--122",
  doi     = "10.1016/j.jclinepi.2023.02.014",
  note    = "Epub 2023 Feb 16"
}

@article{Meffen2021,
  author  = "Meffen, A. and Houghton, J.S.M. and Nickinson, A.T.O. and others",
  title   = "Understanding variations in reported epidemiology of major lower extremity amputation in the UK: a systematic review",
  year    = "2021",
  journal = "BMJ Open",
  volume  = "11",
  pages   = "e053599",
  doi     = "10.1136/bmjopen-2021-053599",
  url     = "http://dx.doi.org/10.1136/bmjopen-2021-053599"
}

@article{Sze2020,
  author  = "Sze, S. and Pan, D. and Nevill, C.R. and Gray, L.J. and Martin, C.A. and Nazareth, J. and Minhas, J.S. and Divall, P. and Khunti, K. and Abrams, K.R. and Nellums, L.B. and Pareek, M.",
  title   = "Ethnicity and clinical outcomes in COVID-19: A systematic review and meta-analysis",
  year    = "2020",
  journal = "EClinicalMedicine",
  volume  = "29--30",
  pages   = "100630",
  doi     = "10.1016/j.eclinm.2020.100630"
}

@article{Teglia2022,
  author  = "Teglia, F. and Angelini, M. and Astolfi, L. and Casolari, G. and Boffetta, P.",
  title   = "Global Association of COVID-19 Pandemic Measures With Cancer Screening: A Systematic Review and Meta-analysis",
  year    = "2022",
  journal = "JAMA Oncology",
  volume  = "8",
  number  = "9",
  pages   = "1287--1293",
  doi     = "10.1001/jamaoncol.2022.2617"
}

@article{Wolery2010,
  author  = "Wolery, M. and Busick, M. and Reichow, B. and Barton, E.E.",
  title   = "Comparison of Overlap Methods for Quantitatively Synthesizing Single-Subject Data",
  year    = "2010",
  journal = "Journal of Special Education",
  volume  = "44",
  pages   = "18--28",
  doi     = "10.1177/0022466908328009",
  url     = "https://doi.org/10.1177/0022466908328009"
}

@book{Gliklich2020Registry,
  editor      = {Gliklich, Richard E. and Leavy, Michelle B. and Dreyer, Nancy A.},
  title       = {Registries for Evaluating Patient Outcomes: A User's Guide},
  edition     = {4th},
  year        = {2020},
  month       = sep,
  address     = {Rockville (MD)},
  publisher   = {Agency for Healthcare Research and Quality (US)},
  series      = {AHRQ Methods for Effective Health Care},
  note        = {Report No.: 19(20)-EHC020. Bookshelf ID: NBK562575},
  url         = {https://www.ncbi.nlm.nih.gov/books/NBK562575/}
}

@article{Lex2014UpSet,
  author  = {Lex, Alexander and Gehlenborg, Nils and Strobelt, Hendrik and Vuillemot, Romain and Pfister, Hanspeter},
  title   = {UpSet: Visualization of Intersecting Sets},
  year    = {2014},
  journal = {IEEE Transactions on Visualization and Computer Graphics},
  volume  = {20},
  number  = {12},
  pages   = {1983--1992},
  doi     = {10.1109/TVCG.2014.2346248}
}

\backmatter

\begin{appendices}

\section{Proofs}\label{sec:p}
\textbf{Proof of Proposition}
\begin{proof}

\begin{align*}
     &\exists k, \bigcap\limits_{i\vert S_i\in A} R_{i,k} = \emptyset\\
     &\Rightarrow \exists k, \bigcap\limits_{i\vert S_i\in A} D_{i,k} = \emptyset\\
     &\Rightarrow \bigcap\limits_{i\vert S_i\in A} S_i = \emptyset
\end{align*}
\end{proof}
\textbf{Proof of proposition~\ref{prop:epobotrvokc}}

\begin{proof}

\begin{align*}
     \mathbf{r}_{i_1,k} \cdot \mathbf{r}_{i_2,k} = 0 &\Rightarrow  \forall l\in \{1, 2,..., m_k\}, r_{i_1,k,l} \cdot r_{i_2,k,l} = 0\\
     &\Rightarrow \forall l\in \{1, 2,..., m_k\}, R_{i_1,k}\cap R_{\cdot,k,l} \cap R_{i_2,k}= \emptyset\\
    &\Rightarrow R_{i_1,k}\cap R_{\cdot,k} \cap R_{i_2,k}= \emptyset\\
    &\Rightarrow R_{i_1,k}\cap R_{i_2,k}= \emptyset\\
    &\Rightarrow D_{i_1,k}\cap D_{i_2,k}= \emptyset\\
    &\Rightarrow S_{i_1} \cap S_{i_2} = \emptyset
\end{align*}
\end{proof}
\textbf{Proof of equation~\ref{eq:population_size_PO}}

\begin{proof}

\begin{align*}
     \vert \bigcup\limits_{i|S_i \in \Omega} S_i \vert &=\sum\limits_{A \subseteq \Omega}(-1)^{\vert A \vert +1}\vert \bigcap\limits_{i|S_i \in A} S_i \vert \\
     &\geq \sum\limits_{A\vert A \subseteq \Omega, \vert A\vert=1}\vert \bigcap\limits_{i|S_i \in A} S_i \vert-\sum\limits_{A\vert A \subseteq \Omega, \vert A\vert=2}\vert \bigcap\limits_{j|S_j \in A} S_j \vert\\
    &= \sum\limits_{i|S_i \in \Omega} n_i-\sum\limits_{A\vert A \subseteq \Omega, \vert A\vert=2}\vert \bigcap\limits_{j|S_j \in A} S_j \vert\\
    &= \sum\limits_{i|S_i \in \Omega} n_i-\sum\limits_{A\vert A \subseteq \Omega, \vert A\vert=2} (\frac{ \vert \bigcap\limits_{j\vert S_j\in A} S_j \vert}{ \vert \bigcup\limits_{j\vert S_j\in A} S_j \vert}\cdot\vert\bigcup\limits_{i|S_i \in A} S_i \vert)\\
    &= \sum\limits_{i|S_i \in \Omega} n_i-\sum\limits_{A\vert A \subseteq \Omega, \vert A\vert=2} (\pi(A)\cdot\vert\bigcup\limits_{j|S_j \in A} S_j \vert)\\
    &= \sum\limits_{i|S_i \in \Omega} n_i-\sum\limits_{A\vert A \subseteq \Omega, \vert A\vert=2} (\frac{\pi(A)}{1+\pi(A)}\cdot\sum\limits_{j|S_j \in A} n_j)\\
\end{align*}
\end{proof}

\section{Aggregated data: Study-level (aggregate) information extracted from reports}\label{app:aggdata}

Throughout the paper we use the term \emph{aggregated data} (also called study-level information)
to mean information that can be obtained from a study report, protocol, supplement, or registry entry
without access to individual participant data (IPD).
This includes both (i) classical aggregate/summary statistics used in standard aggregate-data meta-analysis,
and (ii) eligibility envelopes (ranges/sets) of key characteristics that define which observations
could have been included in the study.

\paragraph{What we treat as aggregated data (used in this paper).}
For each included study $i$, the aggregated information potentially consists of:
\begin{itemize}
\item \textbf{Sample size information:} total sample size $n_i$, and when applicable subgroup/arm sizes
(e.g., treatment/control counts, number of screened tests, number of events).
\item \textbf{Effect summary information (for downstream synthesis):} reported effect estimates and measures
of uncertainty (e.g., $\hat\theta_i$ with standard error, confidence interval, $p$-value), or contingency tables
(e.g., $2\times 2$ tables) from which these can be derived.
\item \textbf{Eligibility envelopes for key characteristics (central to overlap inference here):}
reported restrictions that define the study sample, represented as sets $R'_{i,k}$.
Typical examples are:
\begin{itemize}
\item time windows (e.g., years 2010--2019, months, or specific periods),
\item geographic regions (e.g., countries, states, hospital catchment areas),
\item setting definitions (e.g., hospital vs community),
\item diagnosis/case definitions (e.g., ICD code lists at a specified coding level),
\item age ranges or other baseline restrictions (e.g., 18--65, sex).
\end{itemize}
The envelope $R'_{i,k}$ may be an interval, a finite set, a union of intervals, or a categorical set.
\end{itemize}

\paragraph{What we do not mean by aggregated data.}
\begin{itemize}
\item \textbf{Not IPD:} we do not observe the individual values $d'_{I_{i,j},k}$, unique identifiers,
or any cross-study linkage of records.
\item \textbf{Not study-specific administrative metadata for overlap inference:}
bibliographic or administrative fields such as author list, publication year, registry identifier,
data-extraction timestamp, or insurance status are not intrinsic to an observation event and are therefore
not treated as key characteristics for overlap inference (cf.\ Section~\ref{sec:example_real_world}).
We may record such metadata for documentation, but it does not enter $\tilde{\pi}'_{\mathcal P'}$.
\end{itemize}

\paragraph{Why we include eligibility envelopes as aggregated data.}
In many observational studies, overlap risk is driven by shared data sources and overlapping inclusion criteria.
Even when effect summaries are available, overlap cannot be assessed from effect estimates alone.
Our approach therefore treats reported eligibility restrictions on key characteristics as part of the study-level
information, and uses them to compute the overlap potential $\tilde{\pi}'_{\mathcal P'}(A)$.

\section{Large size Graphs}\label{sec:G}
\begin{figure}[ht]
    \centering
    \includegraphics[width=1\textwidth]{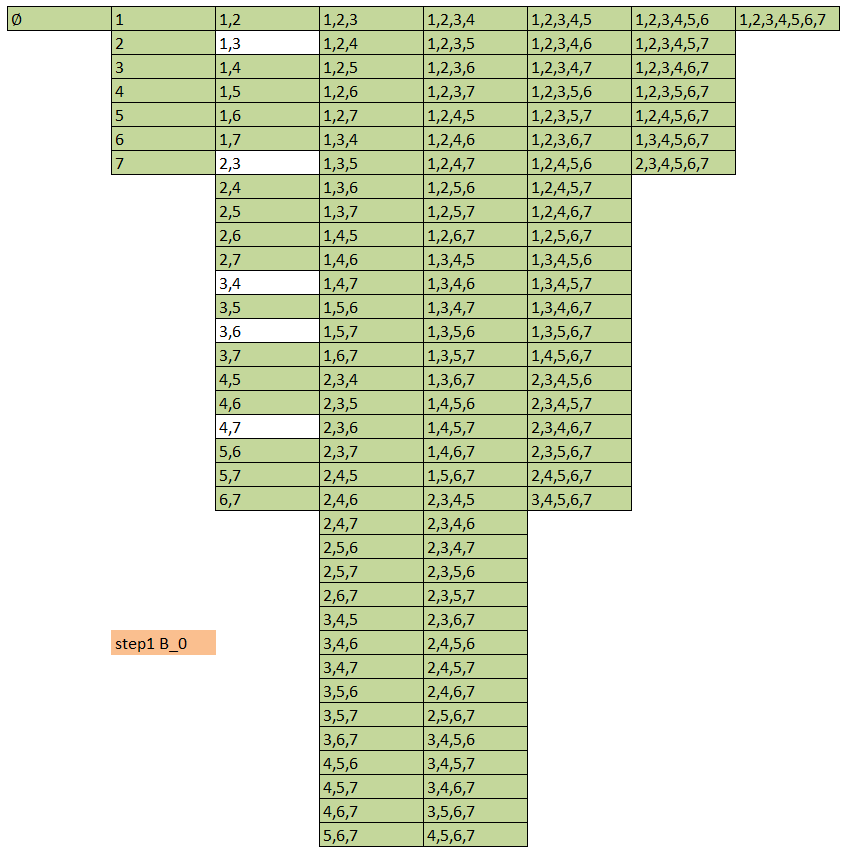}
    \caption{Example: Overlap-free subsets $B_0=\{A\subseteq\Omega:\tilde{\pi}'_{\mathcal P'}(A)=0\}$. Each cell represents a study-combination $A$, with filled cells indicating subsets for which overlap is excluded at the chosen reporting resolution.}
    \label{fig:7studyb0}
\end{figure}

\begin{figure}[ht]
    \centering
    \includegraphics[width=1\textwidth]{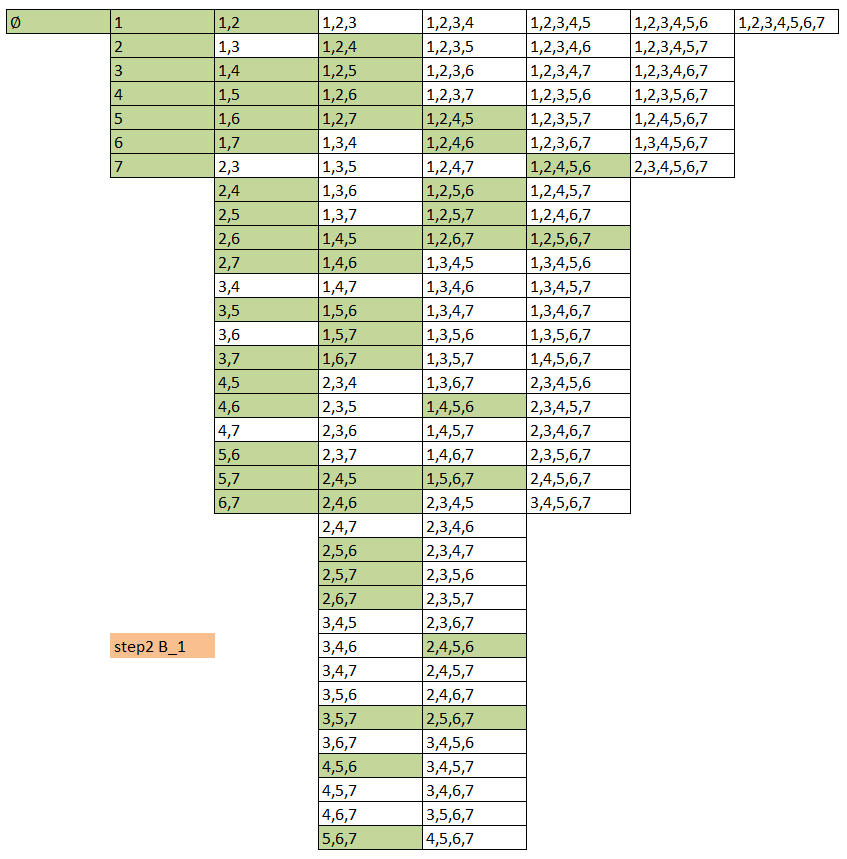}
    \caption{Example: Overlap-free subsets $B_1\subseteq B_0$, retaining only those $A\in B_0$ for which every subset of $A$ also lies in $B_0$.}

    \label{fig:7studyb1}
\end{figure}

\begin{figure}[ht]
    \centering
    \includegraphics[width=1\textwidth]{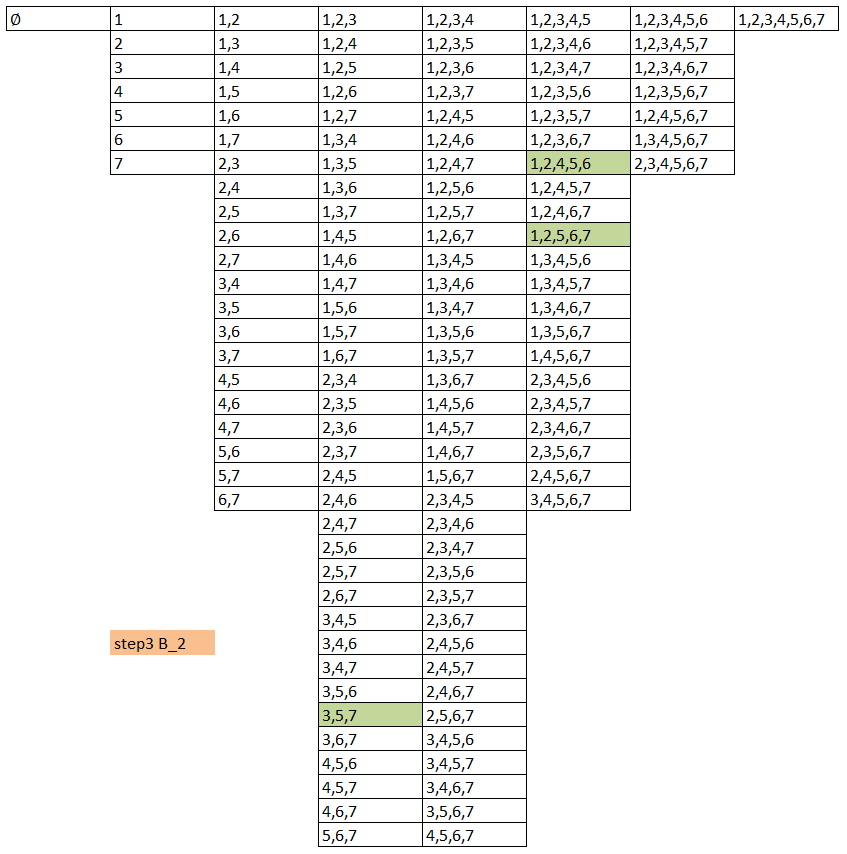}
    \caption{Example: Overlap-free subsets $B_2\subseteq B_1$, retaining only those $A\in B_1$ that have no strict superset in $B_1$. These maximal sets serve as overlap-free candidate study sets for subsequent selection criteria (Section~\ref{sec:ofs}).}
    \label{fig:7studyb2}
\end{figure}

\begin{figure}[!htbp]
\includegraphics[width=1\textwidth]{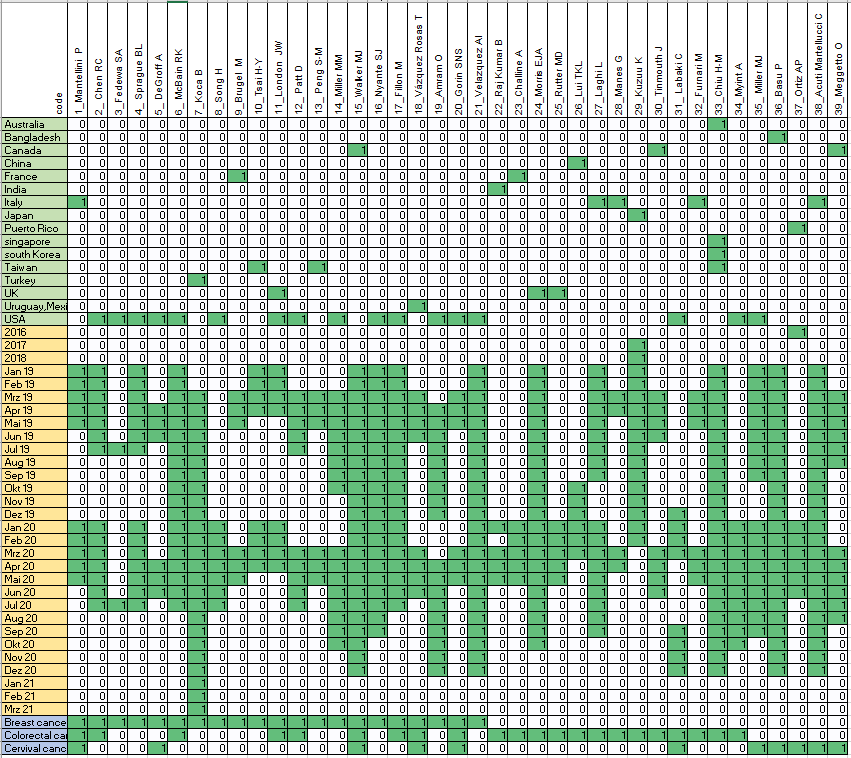}
    \caption{Case 2: Vectorization, Teglia et al.}
    \label{fig: vectorized_teglia}
\end{figure}

\begin{figure}[!htbp]
\includegraphics[width=1\textwidth]{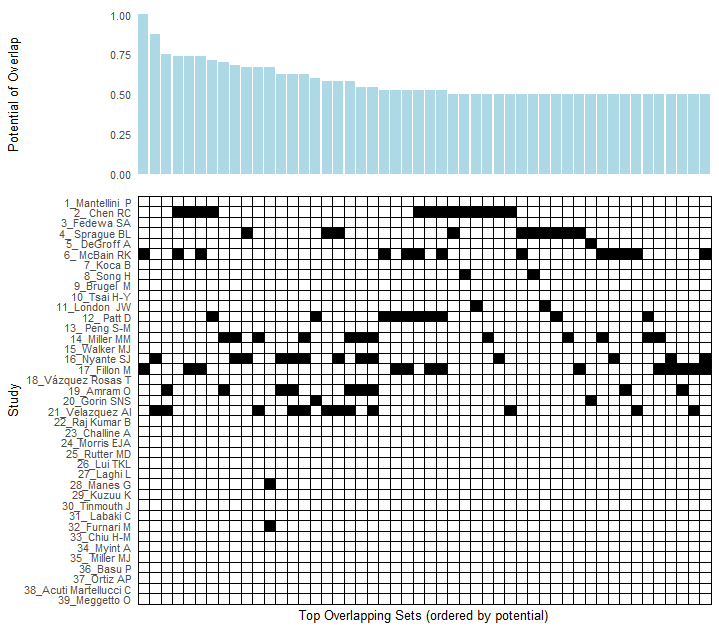}
    \caption{Case~2 (Teglia et al.): Study-combinations with the 50 largest overlap potentials. Each column corresponds to a subset $A\subseteq\Omega$ (membership indicated by filled cells); the value shown for each column is $\tilde{\pi}'_{\mathcal P'}(A)$. Columns are ordered by decreasing overlap potential. Combinations with value $1$ have identical binned envelopes across all selected key characteristics at the chosen resolution.}
    \label{fig: stepone_teglia_39_50}
\end{figure}

\begin{figure}[!htbp]
\includegraphics[width=0.60\textwidth]{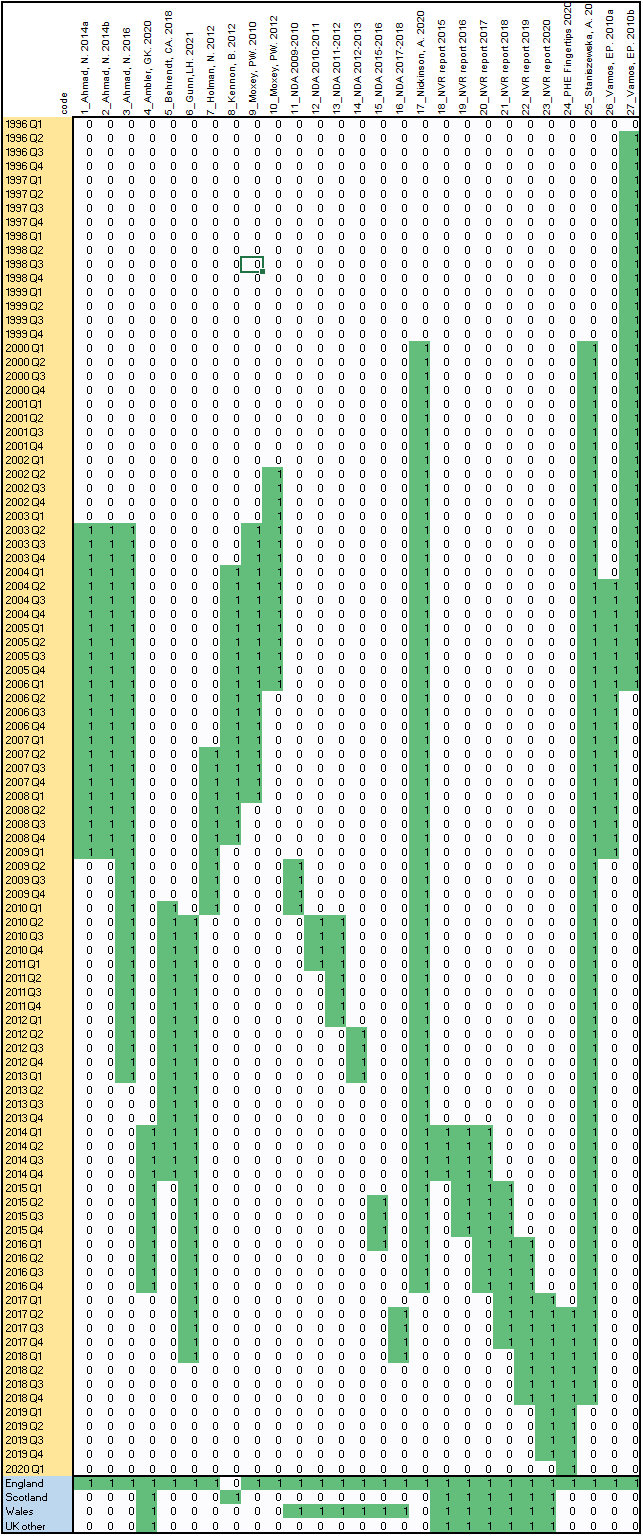}
    \caption{Case~3 (Meffen et al.): Binary encoding of reported envelopes for the key characteristics (time window and region) at the chosen reporting resolution. Columns correspond to studies; rows correspond to bins of each key characteristic. This encoding is used to compute overlap potentials and to enumerate overlap-free candidate study sets.}
    \label{fig: Meffen_Vectorization}
\end{figure}

\begin{figure}[!htbp]
\includegraphics[width=0.8\textwidth]{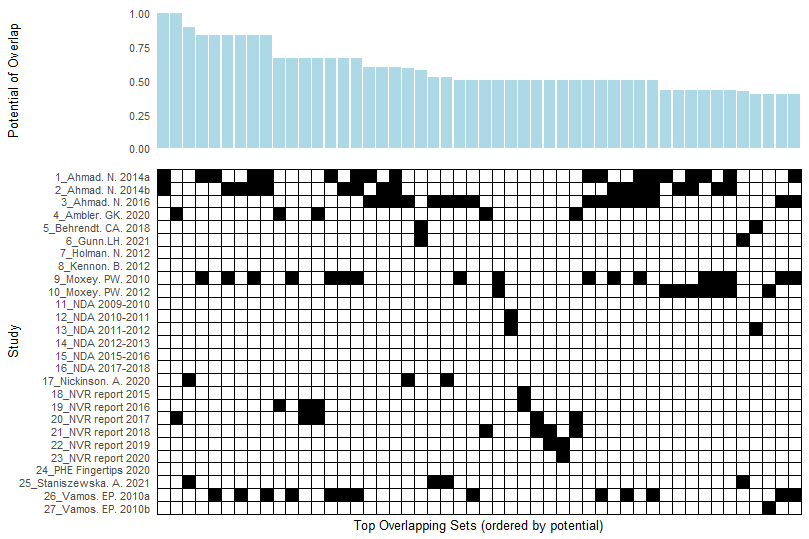}
    \caption{Case~3 (Meffen et al.): Study-combinations with the 50 largest overlap potentials under the chosen key characteristics and partition. Each column denotes a subset $A\subseteq\Omega$ (membership indicated by filled cells); values report $\tilde{\pi}'_{\mathcal P'}(A)$ under Assumption~\ref{ass:partition_compatibility}.}
    \label{fig: meffen_all_27}
\end{figure}




\end{appendices}


\end{document}